\documentclass[a4paper,UKenglish]{lipics-v2016}

\usepackage{algorithm,algpseudocode}
\usepackage{algorithmicx}

\newcommand{\ou}{|{\it output}|}
\newcommand{\Or}{{\rm O}}
\newcommand{\order}{{\rm o}}
\newcommand{\ep}{\epsilon}
\newcommand{\ans}{{\it ans}}
\newcommand{\ac}{{\sf access}}
\newcommand{\bo}{{\sf bound}}
\newcommand{\ra}{{\sf rank}}
\newcommand{\sel}{{\sf select}}
\newcommand{\tims}{{\it IMS2}}
\newcommand{\ims}{{\it IMS}}
\newcommand{\dms}{{\it DMS}}
\newcommand{\dis}{\displaystyle}

\newtheorem{property}{Property}
 
\usepackage{microtype}


\bibliographystyle{plainurl}

\title{Enumerating Range Modes}
\titlerunning{Enumerating Range Modes} 

\author[1]{Kentaro Sumigawa}
\author[2]{Sankardeep Chakraborty}
\author[3]{Kunihiko Sadakane}
\author[4]{Srinivasa Rao Satti}
\affil[1]{The University of Tokyo, Japan\\
  \texttt{kentaro\_sumigawa@me2.mist.i.u-tokyo.ac.jp}}
\affil[2]{RIKEN Center for Advanced Intelligence Project, Japan\\
  \texttt{sankar.chakraborty@riken.jp}}
 \affil[3]{The University of Tokyo, Japan\\
  \texttt{sada@mist.i.u-tokyo.ac.jp}}
  \affil[4]{Seoul National University, South Korea\\
  \texttt{ssrao@cse.snu.ac.kr}}
\authorrunning{Sumigawa, Chakraborty, Sadakane and Satti} 

\Copyright{Sumigawa, Chakraborty, Sadakane and Satti}

\subjclass{Dummy classification -- please refer to \url{http://www.acm.org/about/class/ccs98-html}}
\keywords{range mode, space-efficient data structure}

\EventEditors{John Q. Open and Joan R. Acces}
\EventNoEds{2}
\EventLongTitle{42nd Conference on Very Important Topics (CVIT 2016)}
\EventShortTitle{CVIT 2016}
\EventAcronym{CVIT}
\EventYear{2016}
\EventDate{December 24--27, 2016}
\EventLocation{Little Whinging, United Kingdom}
\EventLogo{}
\SeriesVolume{42}
\ArticleNo{23}

\begin{document}

\maketitle

\begin{abstract}
We consider the range mode problem where given a sequence and
a query range in it, we want to find items with maximum frequency
in the range. We give time- and space- efficient algorithms for this problem. Our algorithms are efficient for small maximum frequency cases.
We also consider a natural generalization of the problem:
the range mode enumeration problem, for which there has been no
known efficient algorithms.  Our algorithms have query time 
complexities which is linear to the output size plus small terms.
\end{abstract}


\section{Introduction}
\label{sec:intro}

We consider the range mode problem, defined as follows.
\begin{definition}[Mode]
Given a non-empty multiset $S$,
$x\in S$ is said to be a \emph{mode} of $S$,
if its multiplicity is no smaller than those of any other elements.
\end{definition}

\begin{definition}[Range mode problem]
For a sequence $A[0...n-1]$ and a range $[l,r]$ of $A$ ($0\leq l \leq r<n$),
output any one of the modes of the multiset $\{A[l],A[l+1],\ldots,A[r-1],A[r]\}$.
\end{definition}
The problem has many applications in data mining and data analysis~\cite{DemaineLM02,FangSGMU98}. Moreover, there is a strong interest in theory community as well for this problem as it is related to the famous Boolean matrix multiplication and set intersection problem~\cite{Chan2014}.

In this paper, we consider the indexing version of the range mode problem.
That is, given a sequence of length $n$, we first construct a data structure,
called an \emph{index}.  Then given a query range $[l,r]$, we solve the query using
the index as well as the input.  The algorithm is measured by the index size (in bits) and query time complexity.
There are many existing work~\cite{Petersen08,Chan2014,PG09,GreveJLT10,DM11}\footnote{The papers~\cite{Chan2014,DM11} have the same title, but some of the results
in~\cite{DM11} do not appear in~\cite{Chan2014}.} and some of them are summarized in Table~\ref{table-rmo}.

Our first contribution is space-efficient indexes for the range mode problem,
for the case the maximum multiplicity $m$ of an item in the set is small.
Table~\ref{table-rmo} summarizes our results.  The one in Corollary~\ref{cor-loglog} has better time and space complexities than that of~\cite{DM11} with $\ep=0$,
which is also specialized for small $m$ and has
space complexity $\Or(nm\log n)$ bits and query time complexity $\Or(\log\log n)$.

Our second contribution is efficient indexes for the range mode enumeration problem,
defined as follows.
\begin{definition}[Range Mode Enumeration Problem]
Given a sequence $A[0...n-1]$ and a query range $[l,r]\ (0\leq l \leq r<n)$, output \emph{all} items with the largest number of occurrences
in the multiset $\{A[l],A[l+1],\ldots,A[r-1],A[r]\}$.
\end{definition}

Though the problem seems to be a natural generalization of the range mode problem,
there has been no existing work.
A related and important problem, the set intersection problem~\cite{Chan2014},
has been considered.
However, the set intersection problem
can be reduced to the range mode enumeration problem, whereas the converse is not true.
We cannot use existing algorithms for the set intersection problem to solve the range
mode enumeration problem.  A simple modification of an existing algorithm~\cite{DM11}
works, but it takes $\Or(n^\epsilon)$ time for each output of an item
(see Theorem~\ref{thm-enum-ex2}).
We give faster solutions whose query time complexity is linear to the output size
plus some small term.  Table~\ref{table-enum} summarizes the results.

\begin{table}[bt]
	\centering
	\caption{Complexities of data structures for the range mode problem where $n$ is the number of terms of a string and 
	$m$ is the maximum frequency of an item.  The space complexities do not include one for the input string.}
	\label{table-rmo}
	\begin{tabular}{|c|c|c|c|}
		\hline
		Data structure & Space complexity (bits) & Query time complexity & conditions \\ \hline
	\cite{Petersen08} & $\Or\left({n^{2-2\ep}}\log n\right)$ & $\Or(n^{\ep})$ & $0\leq \ep \leq 1/2$\\	
	\cite{Chan2014}& $\Or\left({n^{2-2\ep}}\right)$ & $\Or(n^{\ep})$ & $0\leq \ep \leq 1/2$\\	
	\cite{PG09}& $\Or\left(\dis\frac{n^2\log\log n}{\log n}\right)$ & $\Or(1)$&\\	\cite{GreveJLT10}& $\Or(nm \log n)$ & $\Or(\log m)$ & \\
    \cite{DM11} & $\Or((n^{1-\ep} m+n)\log n)$ & $\Or(n^\ep+\log\log n)$ &$0\leq \ep \leq 1/2$\\ \hline
Theorem~\ref{thm-loglog2} & $\Or\left(4^knm\left(\frac{n}{m}\right)^{\frac{1}{2^{2^k}}}\right)$ & $\Or(2^k)$ & $k$ is any positive integer\\ 
Theorem~\ref{thm-m-small} & $\Or(nm)$ & $\Or(\min\{\log m, \log\log n\})$ &\\ 
Corollary~\ref{cor-loglog} & $\Or\left(nm\left(\log\log\frac{n}{m}\right)^2\right)$ & $\Or\left(\log\log\frac{n}{m}\right)$ & \\ \hline
	\end{tabular}
\end{table}

\begin{table}[bt]
	\centering
	\caption{Complexities of data structures for the range mode enumeration problem where $\ou$ denotes the number of solutions,
	$n$ the length of the input sequence, $m$ the maximum frequency of symbols, and $\ep$ is a parameter between $0$ and $1/2$ users can choose.
	Note that the space complexities does not contain that for the input sequence $S$.}
	\label{table-enum}
	\begin{tabular}{|c|c|c|}
		\hline
		Data structure & Space (bits) & Query time \\ \hline
Theorem~\ref{thm-enum-ex2} & $\Or(n^{2-2\ep}\log n)$ & $\Or(n^{\ep}\ou)$ \\ 
Theorem~\ref{thm-enum1} &$\Or\left(nm\left(\log\log\frac{n}{m}\right)^2+n\log n\right)$&$\Or\left(\log\log\frac{n}{m}+\ou \right)$ \\
Theorem~\ref{thm-enum2} & $\Or(nm+n\log n)$ & $\Or(\log m+\ou)$ \\ 
Theorem~\ref{thm-enum3} & $\Or(n^{1+\ep}\log n+n^{2-\ep})$ & $\Or(\log m+n^{1-\ep} +\ou)$\\ \hline
	\end{tabular}
\end{table}

The paper is organized as follows.
In Section~\ref{sec:pre}, we review basic properties of the range mode problem
and existing algorithms for the range mode problem.
We also explain fundamental data structures for storing integer sequences.
In Section~\ref{sec:rm}, we give our improved algorithms for the range mode problem.
In Section~\ref{sec-enum}, we give algorithms for the range mode enumeration problem.
Section~\ref{sec:conclusion} summaries the paper.
Some of the proofs, algorithms, and figures are given in the appendix.

\section{Preliminaries}\label{sec:pre}

\subsection{Basic properties}
To avoid confusion between modes and frequency of modes, from now on we consider range mode problems for 
not integer sequences but strings on an alphabet.  We define the following.
\begin{itemize}
\item $S$: input string
\item $n$: the length of string $S$
\item $\Sigma$: the set of characters (alphabet) of $S$
\item $f(l,r)$: the frequency of the modes of the substring $S[l,r]$
\item $m$: the frequency of a character with maximum frequency, that is, $m=f(0,n-1)$
\end{itemize}
We assume that there exists a bijection $\Sigma \rightarrow \{0,1,\ldots,|\Sigma|-1\}$ which can be computed 
in constant time.
We sometimes identify characters in the alphabet and integers.

\begin{lemma}
\label{lem-freq-require}[\cite{KMS05}]
If non-empty multisets $M,M_1$ and a multiset $M_2$ satisfies $M=M_1\cup M_2$
and if $x$ is a mode of $M$, at least one of the following holds.
\begin{itemize}
\item $x$ is a mode of $M_1$.
\item $x$ belongs to $M_2$.
\end{itemize}
\end{lemma}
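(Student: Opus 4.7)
The plan is to prove the contrapositive: assuming $x$ is a mode of $M$ and that $x \notin M_2$, I will show $x$ must be a mode of $M_1$. The two bullet points in the conclusion are a disjunction, so negating one and deriving the other is the cleanest route.

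First I would set up notation. Let $\mu_N(y)$ denote the multiplicity of an element $y$ in a multiset $N$. Since $M = M_1 \cup M_2$, the definition of multiset union gives $\mu_M(y) = \mu_{M_1}(y) + \mu_{M_2}(y)$ for every $y$. In particular, $\mu_{M_1}(y) \leq \mu_M(y)$ for all $y$. Under the assumption $x \notin M_2$, i.e.\ $\mu_{M_2}(x) = 0$, this specializes to the equality $\mu_{M_1}(x) = \mu_M(x)$.

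Next I would exploit the mode property. Because $x$ is a mode of $M$, we have $\mu_M(x) \geq \mu_M(y)$ for every $y$. Combining this chain of inequalities, for any $y$ that appears in $M_1$,
\[
\mu_{M_1}(x) = \mu_M(x) \geq \mu_M(y) \geq \mu_{M_1}(y),
\]
which says exactly that $x$ is a mode of $M_1$ (after noting that $M_1$ is non-empty, so the notion of mode is well-defined; also $x$ itself lies in $M_1$ since $\mu_M(x)\geq 1$ and $\mu_{M_2}(x)=0$, so $\mu_{M_1}(x)\geq 1$). This completes the contrapositive and hence the lemma.

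I do not anticipate a genuine obstacle: the only subtle point is the bookkeeping about which of $M_1, M_2$ are required to be non-empty, but the statement already specifies only $M$ and $M_1$ are non-empty, and the argument above never divides by a multiplicity or requires $M_2 \neq \emptyset$. Consequently the proof is a short calculation with multiset multiplicities once the contrapositive is in place.
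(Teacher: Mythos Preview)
Your proof is correct and follows essentially the same approach as the paper's: both exploit that $x\notin M_2$ forces $\mu_{M_1}(x)=\mu_M(x)$ and then compare against an arbitrary element via the mode inequality. The only cosmetic difference is that the paper phrases it as a proof by contradiction (assume both bullets fail and derive a contradiction with $x$ being a mode of $M$), whereas you argue the contrapositive directly; the underlying inequality chain is identical.
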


\begin{lemma}
\label{lem-freq-first}
For $l_2<l_1\leq r_1<r_2$, if $f(l_1,r_1)=f(l_2,r_2)$, modes of range $[l_1,r_1]$
are also modes of range $[l_2,r_2]$.
\end{lemma}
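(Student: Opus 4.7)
The plan is to exploit monotonicity of character frequencies under range inclusion. Writing $\mathrm{occ}(x,l,r)$ for the number of occurrences of character $x$ in $S[l,r]$, the inclusion $[l_1,r_1]\subseteq[l_2,r_2]$ (which follows from $l_2<l_1\leq r_1<r_2$) gives $\mathrm{occ}(x,l_1,r_1)\leq \mathrm{occ}(x,l_2,r_2)$ for every $x\in\Sigma$. The quantity $f(l,r)$ is just $\max_{x\in\Sigma}\mathrm{occ}(x,l,r)$, and a character is a mode of $[l,r]$ precisely when its number of occurrences in $S[l,r]$ equals $f(l,r)$.

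With these observations in place, I would proceed as follows. Let $x$ be an arbitrary mode of $[l_1,r_1]$, so $\mathrm{occ}(x,l_1,r_1)=f(l_1,r_1)$. By monotonicity, $\mathrm{occ}(x,l_2,r_2)\geq \mathrm{occ}(x,l_1,r_1)=f(l_1,r_1)$, and by the hypothesis $f(l_1,r_1)=f(l_2,r_2)$ this gives $\mathrm{occ}(x,l_2,r_2)\geq f(l_2,r_2)$. Since $f(l_2,r_2)$ is by definition the maximum of $\mathrm{occ}(\cdot,l_2,r_2)$, the reverse inequality $\mathrm{occ}(x,l_2,r_2)\leq f(l_2,r_2)$ also holds, so $\mathrm{occ}(x,l_2,r_2)=f(l_2,r_2)$ and $x$ is a mode of $[l_2,r_2]$.

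I do not expect any real obstacle here; the argument is a two-line chain of inequalities that follows directly from the definitions, and in particular does not require an appeal to Lemma~\ref{lem-freq-require}. The only thing to be careful about is keeping the distinction between "$x$ is a mode" (a property of the character) and "$f(l,r)$" (the common maximal frequency) clear, so that the equality chain $f(l_1,r_1)\leq \mathrm{occ}(x,l_2,r_2)\leq f(l_2,r_2)=f(l_1,r_1)$ is unambiguous.
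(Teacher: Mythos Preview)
Your proof is correct and essentially identical to the paper's own argument: both use range inclusion to get the lower bound on the frequency of a mode $x$ in $[l_2,r_2]$, then the hypothesis $f(l_1,r_1)=f(l_2,r_2)$ together with the definition of $f$ as a maximum to pin the frequency exactly at $f(l_2,r_2)$. The only difference is notational (your $\mathrm{occ}$ versus the paper's verbal ``frequency of $c$''); neither proof invokes Lemma~\ref{lem-freq-require}.
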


\subsection{Algorithms for the range mode problem}
\label{subsec-dm}
We review the data structure with $\Or(n^{2-2 \epsilon})$-word space and $\Or(n^\epsilon)$ query time~\cite{DM11}.
The input string $S$ of length $n$ is partitioned into $n/s=n^{1-\ep}$ blocks of length $s=n^\ep$ each.
In addition to $S$, the data structure has the following four components.
\begin{description}
 \item[Two-dimensional array $A$]: For each character in the alphabet, an array for storing positions of its occurrences is used.
 \item[Array $B$]: For each position $i$ of $S$, $B[i]$ stores the number of times that the character $S[i]$ occurs 
        in the substring $S[0,\ldots,i-1]$.
 \item[Two-dimensional array $C$]: The $(i, j)$ entry of $C$ stores the frequency of modes of the substring from
        the $i$-th block to the $j$-th block.  That is, $C[i][j] = f(i\cdot s ,(j+1)\cdot s-1)$.
 \item[Two-dimensional array $D$]: The $(i, j)$ entry of $D$ stores one of the modes of the substring from
        the $i$-th block to the $j$-th block.
\end{description}
The space complexity is $O(n^{2-2\ep})$ words, for any fixed $0\leq \ep \leq 1/2$.
Using these arrays, any query $[l,r]$ is solved in $\Or(n^\ep)$ time as follows.
If a query is contained inside a block, we scan the range $[l,r]$ and
for each character in the alphabet, we count its number of occurrences.  This takes $\Or(s)=\Or(n^\ep)$ time.
If a query range $[l,r]$ lies on more than one block, we partition the query range
into prefix $[l,(b_l+1)s-1]$, span $[(b_l+1)s,b_rs-1]$, and suffix $[b_rs,r]$
where $b_l=\lfloor l/s\rfloor ,b_r=\lfloor r/s\rfloor $.  Note that the span may be empty.

From Lemma~\ref{lem-freq-require}, modes of range $[l,r]$ are either
(a) modes of the span, (b) a character in the prefix, or (c) a character in the suffix.
For (b) and (c), we scan the prefix and the suffix, and for each character in them,
we compute its frequency using the arrays $A$ and $B$ (for details refer to~\cite{DM11}).
For (a), one of the modes of the span and its frequency is obtained from $D[b_l][b_r]$
and $C[b_l][b_r]$, respectively.
This also takes $\Or(s)=\Or(n^\ep)$ time.

There exist improved data structures which are summarized in Table~\ref{table-rmo}.

\subsection{Representations of integer sequences}

We define $\ims(n,u)$ (increasing monotone sequences)
and $\dms(n,u)$ (decreasing monotone sequences)
as follows.
\begin{definition}
We define $\ims(n,u)$ as the set of all integer sequences 
$A$ of length $n$ such that
$0\leq A[0] \leq A[1] \leq \cdots \leq A[n-1] <u$.
We also define $\dms(n, u)$ as the set of all integer sequences
$A$ of length $n$ such that
$u> A[0] \geq A[1] \geq \cdots \geq A[n-1] \geq 0$.
\end{definition}

\begin{theorem}[\cite{Sumigawa2018}]
\label{thm-main}
For a sequence $A\in\ims(n,u)\ \ (n>u)$ and an integer $k \ge 0$, there exists a data structure
using $\Or(2^kn^{1/2^k}u^{1-1/2^k})$ bits which can compute
\begin{itemize}
\item $\ac(i,A)=A[i]$
\item $\bo(i,A)=\#\{j\mid A[j]>i\}$
\end{itemize}
in $\Or(2^k)$ time.
\end{theorem}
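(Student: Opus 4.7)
The plan is to prove the theorem by induction on $k$.

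\medskip
\noindent\textbf{Base case ($k=0$).} The target is $O(n)$ bits and $O(1)$ query time. I would use the classical ``upper-bits'' bit-vector representation underlying Elias--Fano: form a bit vector $V$ of length $n+u=O(n)$ (valid since $n>u$) whose $i$-th one-bit sits at position $A[i]+i$; monotonicity of $A$ guarantees these positions are strictly increasing. Equipping $V$ with $o(|V|)$-bit rank/select indexes, both $\ac(i,A)$ and $\bo(i,A)$ are computed in $O(1)$ time by suitable compositions of $\sel_0$, $\sel_1$, and $\ra_1$ on $V$ (e.g., $\ac(i,A)=\sel_1(V,i+1)-i$ and $\bo(v,A)=n-\ra_1(V,\sel_0(V,v+1))$).

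\medskip
\noindent\textbf{Inductive step.} Given a level-$k$ construction, I would build a level-$(k{+}1)$ structure as follows. Partition $[0,u)$ into $t$ equal value-blocks of width $u/t$, where $t$ is a parameter to be fixed below. This induces a partition of $A$ into contiguous monotone sub-sequences $A_1,\ldots,A_t$ of lengths $n_1,\ldots,n_t$ (with $\sum_j n_j=n$), where after a $-(j-1)u/t$ shift each $A_j\in\ims(n_j,u/t)$. Maintain two objects: the quotient sequence $Q[i]=\lfloor A[i]\,t/u\rfloor\in\ims(n,t)$, stored via the $k$-level hypothesis in $O(2^k n^{1/2^k}t^{1-1/2^k})$ bits; and each $A_j$, stored via the $k$-level hypothesis (falling back to the base case when $n_j\le u/t$), for a total of $O\!\bigl(2^k(u/t)^{1-1/2^k}\sum_j n_j^{1/2^k}\bigr)$ bits. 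Queries are dispatched with a constant number of level-$k$ calls: $\ac(i,A)=(u/t)\,Q[i]+\ac(i', A_{Q[i]})$, where $i'$ is recovered by one $\bo$ on $Q$, and $\bo(v,A)$ combines a $\bo$ on $Q$ (counting entries in higher value-blocks) with one local $\bo$ inside the relevant $A_j$. The total query time is $O(2\cdot 2^k)=O(2^{k+1})$.

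\medskip
\noindent\textbf{Main obstacle.} The technical heart of the proof will be choosing $t$ and tightly bounding the bottom-level sum $2^k(u/t)^{1-1/2^k}\sum_j n_j^{1/2^k}$. The na\"ive concavity estimate $\sum_j n_j^{1/2^k}\le t^{1-1/2^k}n^{1/2^k}$ collapses the bottom contribution back to $c_k:=2^k n^{1/2^k}u^{1-1/2^k}$, yielding no improvement over level $k$. The key saving uses the geometric-mean identity $c_{k+1}^2=\Theta(2^k\cdot u\cdot c_k)$: selecting $t$ to equalise the top- and bottom-level costs (and, where needed, further nesting the two-level decomposition or invoking a H\"older-type sharpening in place of plain concavity) brings the total to $O(c_{k+1})=O(2^{k+1}n^{1/2^{k+1}}u^{1-1/2^{k+1}})$. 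This balancing is the step that genuinely promotes the induction from level $k$ to level $k{+}1$, and is where the heavy lifting of the proof lies.
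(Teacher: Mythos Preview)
This theorem is not proved in the paper: it is quoted as a black-box result from \cite{Sumigawa2018} (the label \texttt{thm-main} appears in the preliminaries, Section~2.3, with no accompanying proof). So there is no ``paper's own proof'' to compare your proposal against.

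That said, your proposal has a genuine gap at exactly the place you flag as the ``main obstacle.'' Your base case is fine, and the two-level decomposition (value-partition into $t$ buckets, store the quotient sequence $Q\in\ims(n,t)$ and the residual sequences $A_j\in\ims(n_j,u/t)$) is a natural thing to try. But as you yourself compute, the concavity bound $\sum_j n_j^{1/2^k}\le t^{1-1/2^k}n^{1/2^k}$ collapses the bottom contribution to $c_k=2^k n^{1/2^k}u^{1-1/2^k}$, and no choice of $t$ rescues this: the bottom term is \emph{independent of $t$} after applying concavity, so ``balancing'' the top against the bottom cannot drive the total below $c_k$. The same phenomenon occurs with the dual index-based split (sample every $s$-th element, recurse on blocks of width $d_j$ with $\sum_j d_j\le u$): concavity on $\sum_j d_j^{1-1/2^k}$ again yields exactly $c_k$. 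Your appeal to a ``H\"older-type sharpening in place of plain concavity'' and to ``further nesting the two-level decomposition'' is where the argument needs to be, but you have not supplied it; the geometric-mean identity $c_{k+1}^2=\Theta(2^k u\, c_k)$ is a target, not a mechanism.

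In short: the skeleton is reasonable, but the proof as written does not establish the inductive step. Since the present paper treats the statement as a citation, you would need to consult \cite{Sumigawa2018} directly to see how the improvement from $c_k$ to $c_{k+1}$ is actually obtained; the technique is not reproduced here.
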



\begin{theorem}[FID~\cite{RRR07}]
For a bit-vector $B$ of length $n$ which contains $u$ ones, consider the following operations.
\begin{itemize}
\item $\ac(i,B)$: returns the $i$-th bit of $B$.
\item $\ra_c(i,B)$: returns $\#\{j\mid B[j]=c\}$.
\item $\sel_c(i,B)$: returns $\min \{j\mid \ra_c(j,B)=i\}$.
\end{itemize}
There exists a data structure which performs the operations in constant time
using $\log\binom{n}{u}+\Theta\left(\frac{n\log \log n}{\log n}\right)$ bits of space.
\end{theorem}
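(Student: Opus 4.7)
The plan is to build a fully indexable dictionary by combining enumerative (combinatorial) coding of small blocks with a constant-time multi-level directory for rank and select, following the general template of Raman--Raman--Rao.

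First, I would partition $B$ into blocks of size $s=\frac{1}{2}\log n$ and represent each block $B_i$ by a pair $(c_i,o_i)$, where $c_i\in\{0,1,\ldots,s\}$ is the number of ones in $B_i$ (its ``class'') and $o_i\in\{0,1,\ldots,\binom{s}{c_i}-1\}$ is the index of $B_i$ among all $s$-bit strings of that class, under some fixed enumeration. The class $c_i$ is stored in a fixed field of $\lceil\log(s+1)\rceil$ bits, while the offset $o_i$ is stored in a variable-length field of $\lceil\log\binom{s}{c_i}\rceil$ bits; a separate array of pointers at the superblock level (superblocks of size $\log^2 n$ blocks) locates the start of each block's offset field. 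A universal lookup table of size $o(n)$ bits indexed by $(c,o,j)$ returns the $j$-th bit and popcount up to position $j$ of the block encoded by $(c,o)$ in constant time, so intra-block $\ac$ and $\ra$ reduce to table lookups.

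For rank, I would add a two-level directory storing, at each superblock boundary, the cumulative number of ones from the start of $B$ (absolute, $\Or(\log n)$ bits), and at each block boundary, the cumulative number of ones from the start of its superblock (relative, $\Or(\log\log n)$ bits); combining these with the intra-block lookup gives $\ra_1$ in $\Or(1)$ time, and $\ra_0(i,B)=i-\ra_1(i,B)$. For select I would use the standard two-tier approach: sample every $(\log n\log\log n)$-th one-bit; for a ``sparse'' super-select-block (one that spans many positions) store the answers explicitly, and for a ``dense'' one use another level of sampling plus the lookup table. The same construction, applied once with respect to $1$-bits and once with respect to $0$-bits, handles both $\sel_1$ and $\sel_0$.

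The main technical step, which I expect to be the main obstacle, is the space analysis. The class fields and all auxiliary directories together contribute $\Or\!\left(\frac{n}{s}\log s+\frac{n}{\log^2 n}\log n\right)=\Or\!\left(\frac{n\log\log n}{\log n}\right)$ bits, which matches the additive term. For the offset fields I need
\[
\sum_{i}\left\lceil\log\binom{s}{c_i}\right\rceil \;\le\; \log\binom{n}{u}+\Or\!\left(\frac{n}{\log n}\right),
\]
where the additive $\Or(n/s)$ absorbs the ceiling losses and the main inequality $\sum_i\log\binom{s}{c_i}\le\log\binom{n}{u}$ follows from the Vandermonde identity $\binom{n}{u}=\sum_{c_1+\cdots+c_{n/s}=u}\prod_i\binom{s}{c_i}$, since every admissible tuple $(c_1,\ldots,c_{n/s})$ occurs on the right-hand side. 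Putting the two contributions together yields the claimed $\log\binom{n}{u}+\Theta\!\left(\frac{n\log\log n}{\log n}\right)$-bit bound while preserving $\Or(1)$ time for all three operations.
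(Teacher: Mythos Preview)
The paper does not give its own proof of this theorem: it is quoted verbatim as a known result from~\cite{RRR07}, with no argument supplied. Your proposal is a faithful sketch of the original Raman--Raman--Rao construction (enumerative block encoding, two-level rank directory, dense/sparse select, and the Vandermonde-based space accounting), so there is nothing in the present paper to compare it against; your approach is the standard one and is correct at the level of detail you give.
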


\section{Improved Data Structures for Range Mode Problem}\label{sec:rm}
We propose efficient data structures for the range mode problem
using $m$, the largest frequency of characters, as a parameter.


Consider the data structure of Section~\ref{subsec-dm} with $\ep=0$.
For simplicity we define $C[i][j]=1$ for any $i, j$ with $i>j$.
Then the $n \times n$ array $C$ satisfies the following property.
\begin{property}
\label{freq-table}
For any adjacent entries in the two-dimensional array $C$, it holds
\begin{align*}
C[i][j]\leq C[i][j+1]\leq C[i][j]+1\ \ \  &(0\leq i<n,\ 0\leq j<n-1),\\
C[i][j]\leq C[i-1][j]\leq C[i][j]+1\ \ \  &(1\leq i<n,\ 0\leq j<n).
\end{align*}
\end{property}
From the definition, $C$ also satisfies:
\begin{property}
Any entry of $C$ is an integer between $1$ and $m$.
\end{property}
Below we propose a data structure for storing $C$ in a compressed form and supporting constant time access.

\subsection{An efficient representation of the array $C$}
We define the set of two-dimensional arrays which have both column-wise and row-wise monotonicity as follows.
\begin{definition}
We define the set of two-dimensional arrays $A[0\ldots n-1][0\ldots n-1]$
which satisfy all the following inequalities as $\tims (n,m)$.
\begin{align*}
A[i][j]\leq A[i][j+1]\ \ \  &(0\leq i<n,\ 0\leq j<n-1),\\
A[i][j]\leq A[i+1][j]\ \ \  &(1\leq i<n,\ 0\leq j<n),\\
0\leq A[i][j] < m\ \ \  &(0\leq i<n,\ 0\leq j<n).	
\end{align*}
\end{definition}

\begin{theorem}
\label{thm-2d-main}
Let $A$ be a two-dimensional array in $\tims(n,m)$ ($n\geq m$) and $k$ be a non-negative integer.
There exists a data structure $S_k(n,m)$ which can output an entry of $A$ in
$\Or\left(2^k\right)$ time using $\Or\left(4^knm\left(\frac{n}{m}\right)^{\frac{1}{2^{2^k}}}\right)$ bits of space.
\end{theorem}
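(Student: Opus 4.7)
My plan is to prove the statement by induction on $k$, constructing $S_k$ via a recursive block decomposition that exploits the monotonicity of $A$ in both directions. For the base case $k=0$, I would store each of the $n$ rows of $A$ separately using Theorem~\ref{thm-main} with parameter $1$: since each row lies in $\ims(n,m)$ (using $n\ge m$), this costs $\Or(n^{1/2}m^{1/2})$ bits per row and allows $\Or(1)$-time access, for a total of $\Or(n^{3/2}m^{1/2}) = \Or(nm(n/m)^{1/2})$ bits, matching the bound at $k=0$.

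For the inductive step, I would partition $A$ into an $(n/b)\times(n/b)$ grid of $b\times b$ blocks, for a block size $b$ to be tuned. The structure has two parts: (i) an outer structure storing the coarse corner array $R[i][j]=A[ib][jb]\in\tims(n/b,m)$, stored using $S_k$ recursively; and (ii) for each block $(i,j)$, a local structure on the residual subarray (after subtracting the corner value), which lies in $\tims(b,r_{ij}+1)$ with $r_{ij}=R[i+1][j+1]-R[i][j]$, also stored via $S_k$. To answer a query, identify the containing block, call the outer and the inner structure once each, and sum the two offsets; total time is $\Or(2^k)+\Or(2^k)=\Or(2^{k+1})$.

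The space analysis hinges on the estimate $\sum_{i,j}r_{ij}=\Or(nm/b)$: decomposing $r_{ij}=(R[i+1][j+1]-R[i+1][j])+(R[i+1][j]-R[i][j])$, each piece telescopes over one index to a quantity at most $m$ per column (resp.\ row) of $R$, giving $\Or(nm/b)$ in total. Hölder's inequality then controls $\sum_{i,j}r_{ij}^{1-\alpha_k}$ with $\alpha_k=1/2^{2^k}$, which, combined with the inductive cost $S_k(b,\cdot)$ per block and $S_k(n/b,m)$ for the outer, yields the total space. Tuning $b$ to balance the outer and inner contributions should produce the claimed $\Or(4^{k+1}nm(n/m)^{1/2^{2^{k+1}}})$ bound.

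The main obstacle is achieving the doubly exponential improvement $\alpha_{k+1}=\alpha_k^2$ per level, rather than the merely additive $\alpha_{k+1}=\alpha_k/2$ that a purely one-dimensional recursion would give. A naive application of Hölder's inequality to the inner sum produces an exponent that does not strictly decrease; the squared gain has to come from the interplay between the outer array's smaller effective ratio $(n/b)/m$ and the inner blocks' reduced per-block ranges, with $b$ chosen so that both layers \emph{compound} into a squared exponent while keeping the leading constant at $4^{k+1}$. Verifying this compounding, and confirming that the base-case structure from Theorem~\ref{thm-main} plugs in correctly at every level of the recursion, is the technical heart of the proof.
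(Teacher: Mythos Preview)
Your overall framework matches the paper's: induction on $k$, base case via Theorem~\ref{thm-main} applied row- or column-wise, and a recursive block decomposition in which the coarse corner array and the in-block residuals are each stored using $S_{k-1}$. The time recurrence $T_k=2T_{k-1}+\Or(1)$ is exactly what the paper obtains.

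The gap is precisely where you suspect it. You propose to store a local $S_k$-structure for \emph{every} block and to control $\sum_{i,j}(r_{ij}{+}1)^{1-\alpha_k}$ by H\"older. This cannot work: with the block side $b=(n/m)^{\alpha_k}$ that is forced by the non-flat part, the number of blocks alone is $(n/b)^2=nm\,(n/m)^{1-2\alpha_k}$, already larger than the target for every $k\ge 1$; even one bit per block breaks the budget. Concretely, the flat blocks ($r_{ij}=0$) contribute $4^k b^{1+\alpha_k}(n/b)^2=4^k n^2 b^{\alpha_k-1}$ to your inner sum, and forcing this to be $\Or(nm)$ would require $b\ge(n/m)^{1/(1-\alpha_k)}$, incompatible with the constraint $b\le(n/m)^{\alpha_k}$ needed elsewhere. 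H\"older does not help either: applied to all $(n/b)^2$ terms with $\sum r_{ij}=\Or(nm/b)$, it yields only $\Or(4^k nm(n/m)^{\alpha_k})$, i.e.\ no improvement over $S_k$.

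What the paper supplies, and what your plan is missing, is the notion of a \emph{flat} block (all entries equal) together with a boundary-path argument (each of the $m$ value-boundaries is a monotone lattice path) showing that at most $2m(n/b)$ blocks are non-flat. Only non-flat blocks are stored; for those the crude bound $(b/d_{ij})^{\alpha_k}\le b^{\alpha_k}$ plus the telescoping $\sum d_{ij}=\Or(nm/b)$ already gives $\Or(4^k nm\,b^{\alpha_k})$, and choosing $b=(n/m)^{\alpha_k}$ squares the exponent. The remaining technical point is testing flatness at query time without a $(n/b)^2$-bit table: the paper does this by an injection $\Phi$ from non-flat block indices into $\{0,\dots,m-1\}\times\{0,\dots,2n/b-2\}$ and a partial inverse $\Psi$ stored in $\Or(nm/b)$ bits. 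That flat/non-flat machinery, not a sharper H\"older estimate, is the missing ingredient that produces the doubly-exponential gain.
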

\begin{proof}
We prove by induction on $k$ that
there exists a data structure $S_k(n,m)$ using at most
$c4^knm\left(\frac{n}{m}\right)^{\frac{1}{2^{2^k}}}$ bits of space, where
$c$ is some constant satisfying:
\begin{itemize}
\item There exists a data structure $Z(n,m)$ using at most ${c^{1/3}\sqrt{nm}}$ bits of space
which can read an entry of $\ims(n,m)$ in constant time.
\end{itemize}
Below we show such a constant $c$ exists, if we use the data structure of Theorem~\ref{thm-main}.

For $k=0$, we use the data structure $Z(n,m)$ of Theorem~\ref{thm-main} for storing each column.
Then the space usage is at most $c^{1/3}n^{3/2}m^{1/2}$ bits, which is at most $cn^{3/2}m^{1/2}$ bits
and the claim holds.

Now we assume that the data structure $S_{k-1}$ exists, and prove $S_k$ also exists.
We partition the two-dimensional array into $u\times u$ blocks where $u=n/t$, each of which has
$t=\left(\frac{n}{m}\right)^{\frac{1}{2^{2^{k-1}}}}$ columns and rows.
The block corresponding to $A[it,\ldots,(i+1)t-1][jt,\ldots,(j+1)t-1]$ is a $t\times t$ two-dimensional array
and denoted by $B_{i,j}$.
We define flatness of a block as follows.
\begin{definition}
A block is called \emph{flat} if all the entries in the block are identical.
\end{definition}
We also define the height of a block.
\begin{definition}
The \emph{height} of block $B_{i,j}$, denoted by $d_{i,j}$, is defined as
\begin{align*}
d_{i,j}&=B_{i,j}[t-1][t-1]-B_{i,j}[0][0]+1\\
(&=A[(i+1)t-1][(j+1)t-1]-A[it][jt]+1).
\end{align*}
That is, the height of a block is the difference between
the maximum and the minimum values in the block, plus one.
\end{definition}
We prove the following:
\begin{theorem}
\label{non-flat-2d}
Among $u^2$ blocks, there are at most $2um$ non-flat blocks.
\end{theorem}
To prove it, 
we define $k$-th \emph{boundary} in a block for $k=0,1,\ldots,m-1$ as follows.
\begin{definition}
\label{def-boundary}
For a two-dimensional array $A\in \tims(n,m)$, consider the $(n+1)\times(n+1)$ grid graph $G$.
The $k$-th boundary of $A$ is defined as the edge set of $G$ satisfying:
\begin{align*}
&\{((i,j),(i+1,j))|A[i][j-1]<k \mbox{ and } A[i][j]\geq k\}\\
&\ \ \ \ \ \ \cup \{((i,j),(i,j+1))|A[i-1][j]<k \mbox{ and } A[i][j]\geq k\},
\end{align*}
where we assume $A[-1][\cdot]=A[\cdot][-1]=-1$.
\end{definition}

Then the following holds.
\begin{property}
\label{prop_bound_path}
The $k$-th boundary is a shortest path from vertex $(0,n)$ to vertex $(n,0)$ of the grid graph $G$.
That is, if we regard the path as a directed path from $(0,n)$ to $(n,0)$,
the edges in the path are of the form of either $(i,j)\rightarrow(i+1,j)$ or $(i,j)\rightarrow(i,j-1)$.
\end{property}

\begin{example}
Figure~\ref{fig-ims2-ex} shows an $\tims(4,4)$ array and its second boundary.
	\begin{figure}[t]
		\begin{center}
			\includegraphics[width=40mm, angle=90]{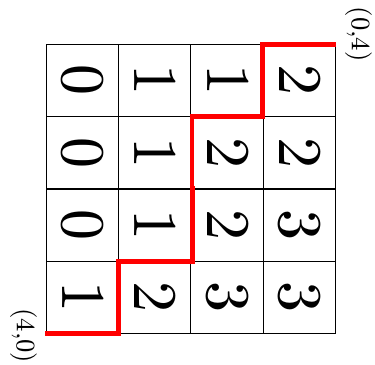}
			\end{center}
		\caption{An $\tims(4,4)$ array.  The second boundary of its grid graph is shown in a red bold line.
		We can see that the boundary is a shortest path from vertex $(0,4)$ to vertex $(4,0)$ of the grid graph.}
		\label{fig-ims2-ex}
	\end{figure}
\end{example}

\begin{proof}[Proof of Theorem~\ref{non-flat-2d}]
It is equivalent that a block is flat, and that any boundary does not pass inside the block.
For each of $m$ boundaries, the number of blocks in which the boundary passes is $2u$.
Therefore the number of blocks which contains at least one boundary in it is at most $2um$.
\end{proof}

Based on this property, we use the following data structure.
\begin{description}
 \item[$E$: to store $\tims(u,m)$]\mbox{}\\
We define $E[i][j]=A[it][jt]\ (0\leq i <u, 0\leq j <u)$.  It is clear that $E\in \tims(u,m)$.
\item[$F_{i,j}$: to store differences inside block $B_{i,j}$]\mbox{}\\
For non-flat block $B_{i,j}$, we define $F_{i,j}[x][y]:=B_{i,j}[x][y]-B_{i,j}[0][0]$.  Then it holds $F_{i,j}\in \tims(t,d_{i,j})$.
\end{description}
It holds for the original array $A$, $A[i][j]=E[i/t][j/t]+F_{i/t,j/t}[i\%t][j\%t]$,
and for flat block $B_{i,j}$, the two-dimensional array $F_{i,j}$ is the zero-value array.
Then an access to the array $A$ is done by Algorithm~\ref{algo-2d-rec}.
At line~\ref{if-2d-flat} of the algorithm, it is necessary to decide if a block is flat or not.
Because a naive data structure using a $u\times u$ Boolean array is space-consuming,
we develop a space-efficient solution. To do so, we define the following mapping.

\begin{definition}[Mapping to decide if a block is flat or not]
\label{def-phi}
We define a mapping from a block number to a pair of integers
$\Phi:\ \{0,\ldots ,u-1\}^2 \rightarrow \{0,\ldots,m-1\}\times \{0,\ldots,2u-2\}$
as
\begin{align*}
(i,j) \mapsto (E[i][j],i-j+u-1).
\end{align*}
\end{definition}
We obtain the following.
\begin{theorem}
For any two distinct non-flat blocks $B_{i_1,j_1}$ and $B_{i_2,j_2}$, it holds $\Phi(i_1,j_1)\neq\Phi(i_2,j_2)$.
\end{theorem}
\begin{proof}
Let $B_{i_1,j_1}$ and $B_{i_2,j_2}$ be two distinct non-flat blocks.
From the definition of $\Phi$ the claim holds if $E[i_1][j_1]\neq E[i_2][j_2]$.
If $E[i_1][j_1]= E[i_2][j_2]$, the $E[i_2][j_2]$-th boundary must pass both $B_{i_1,j_1}$ and $B_{i_2,j_2}$.
It is however not possible to pass both of them if $i_1-j_1=i_2-j_2$ from Property~\ref{prop_bound_path}.
Thus it holds $\Phi(i_1,j_1)\neq\Phi(i_2,j_2)$.
\end{proof}
We also define a mapping, which is something like an inverse of $\Phi$.
\begin{definition}
We define a mapping
\begin{align*}
\Psi:\  \{0,\ldots,m-1\}\times \{0,\ldots,2u-2\} \rightarrow \{0,\ldots ,u-1\}^2\cup \{\bot\}
\end{align*}
as $\Psi(x,y)=(i,j)$ if there exists a non-flat block $B_{i,j}$ with $\Phi(i,j)=(x,y)$,
and $\Psi(x,y)=\bot$ otherwise.
\end{definition}
Then it holds
$\mbox{block } b_{i,j} \mbox{ is not flat } \Leftrightarrow \Psi(\Phi(i,j))=(i,j)$.
To use this fact, we have to compute both $\Psi$ and $\Phi$.
We can compute $\Phi$ at line 3 of Algorithm~\ref{algo-2d-rec} from Definition~\ref{def-phi}.
To compute $\Psi$, we use the following.
\begin{lemma}
\label{lem-2d-mono}
Assume that $\Psi(x,y_1)=(i_1,j_1), \Psi(x,y_2)=(i_2,j_2)$.
Then if $y_1\leq y_2$, it holds $i_1\leq i_2$ and $j_1\geq j_2$.
\end{lemma}
Finally we obtain:
\begin{theorem}\label{th:psi}
$\Psi(x,y)$ is computed in constant time using a data structure of
$(2\sqrt{2}c^{1/3}+2)\dis\frac{nm}{t}$ bits.
\end{theorem}

Therefore the decision at line~\ref{if-2d-flat} of Algorithm~\ref{algo-2d-rec} is done
by using Algorithm~\ref{algo-flat}.

We analyze the space complexity of the data structure $S_k$.
\begin{lemma}\label{lem:sk}
The size of the data structure $S_k$ is
$c4^knm\left(\frac{n}{m}\right)^{\frac{1}{2^{2^{k}}}}$
\end{lemma}

Next we consider the time $T_k$ to access an entry of $S_k$.
In Algorithm~\ref{algo-2d-rec}, lines 3 and 5 take $T_{k-1}$ time.
For other lines including the call to Algorithm~\ref{algo-flat} it takes constant time.  Therefore it holds
$T_k=2T_{k-1}+\Or(1)$
and we obtain $T_k=\Or(2^k)$.

This completes the proof of Theorem~\ref{thm-2d-main}.
\end{proof}
We also obtain:
\begin{theorem}
\label{thm-2d-access}
There exists a data structure of $\Or\left(nm\left(\log\log\frac{n}{m}\right)^2\right)$ bits
supporting an access to $\tims(n,m)\ (n>m)$
in $\Or\left(\log\log\frac{n}{m}\right)$ time.
\end{theorem}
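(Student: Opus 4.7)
The plan is to invoke Theorem~\ref{thm-2d-main} with a carefully tuned value of $k$. Concretely, I would set
\[
  k := \left\lceil \log_2 \log_2 \log_2 (n/m) \right\rceil,
\]
which is a well-defined non-negative integer as soon as $n/m$ is larger than a fixed constant. For the remaining regime, where $n/m$ is bounded by a constant, the claimed bound reduces to $O(nm)$ bits and $O(1)$ time, which already follows from applying Theorem~\ref{thm-2d-main} with $k=0$ and simplifying $(n/m)^{1/2} = O(1)$. So from now on I assume $n/m$ is sufficiently large.

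With the above choice, $2^k = \Theta(\log\log(n/m))$ and hence $2^{2^k} = \Theta(\log(n/m))$. Plugging into the parameters of Theorem~\ref{thm-2d-main}: the exponent factor satisfies
\[
  \left(\frac{n}{m}\right)^{1/2^{2^k}} \;\le\; \left(\frac{n}{m}\right)^{c/\log(n/m)} \;=\; O(1)
\]
for an appropriate constant $c$, while $4^k = (2^k)^2 = O((\log\log(n/m))^2)$. Multiplying these factors, the space bound from Theorem~\ref{thm-2d-main} becomes $O(nm(\log\log(n/m))^2)$ bits and the query time becomes $O(2^k) = O(\log\log(n/m))$, which is exactly what Theorem~\ref{thm-2d-access} claims.

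The main (and minor) obstacle is bookkeeping around the integrality of $k$ and the constant-factor slack introduced by the ceiling: one has to check that rounding $\log_2\log_2\log_2(n/m)$ up to an integer only perturbs $2^k$ and $2^{2^k}$ by constant factors relative to $\log\log(n/m)$ and $\log(n/m)$ respectively, so that $(n/m)^{1/2^{2^k}}$ remains $O(1)$ and the space bound is unaffected up to constants. Since enlarging $k$ only shrinks $(n/m)^{1/2^{2^k}}$ toward $1$, and shrinking it costs at most a squaring in this factor (bringing it to $O(1)$ in the other direction), the verification is straightforward, and the theorem then follows immediately.
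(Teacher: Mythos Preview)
Your proposal is correct and follows exactly the same approach as the paper: both simply substitute $k=\log\log\log(n/m)$ into Theorem~\ref{thm-2d-main} and read off the resulting bounds. Your version is in fact more careful than the paper's one-line proof, since you also handle the integrality of $k$ and the degenerate regime where $n/m$ is bounded.
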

\begin{proof}
We obtain the claim by letting $k = \log \log \log \frac{n}{m}$ in Theorem~\ref{thm-2d-main}.
\end{proof}
By using this data structure for a two-dimensional array $C$ satisfying Property~\ref{freq-table},
we can compute the frequency $f$ of the modes of a query range $[l,r]$.
From Lemma~\ref{lem-freq-first}, a mode is obtained by computing $S[\min \{x\mid C[l][x]=f\}]$.
To compute this, consider the following data structure.

\begin{theorem}
\label{thm-2d-pred}
There exists a data structure of $\Or(nm)$ bits for given column number $r$ of $A\in \tims(n,m)$
and a value $h$, to compute $\min\{x\mid A[r][x]\geq h\}$ in constant time.
\end{theorem}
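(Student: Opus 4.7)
The target $x^{\ast}=\min\{x\mid A[r][x]\geq h\}$ is the column at which the $h$-th boundary (Definition~\ref{def-boundary}) uses its vertical edge in row $r$ of the grid: because $A[r][\cdot]$ is non-decreasing, $x^{\ast}$ is the unique index with $A[r][x^{\ast}-1]<h\leq A[r][x^{\ast}]$ (using the convention $A[r][-1]=-1$), which is exactly the condition for the edge $((r,x^{\ast}),(r+1,x^{\ast}))$ to belong to the $h$-th boundary. My plan is therefore to store each boundary in a form that lets me recover that column in $\Or(1)$ time.

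For every threshold $h\in\{1,\ldots,m\}$ I would store the $h$-th boundary as a bit vector $P_h$ of length $2n$. By Property~\ref{prop_bound_path}, this boundary is a shortest path from $(0,n)$ to $(n,0)$ consisting of exactly $n$ downward and $n$ leftward edges, so I record each edge in traversal order as $0$ (down) or $1$ (left). Equipping every $P_h$ with the FID of~\cite{RRR07} gives constant-time $\sel_c$ and $\ra_c$ at a cost of $\log\binom{2n}{n}+o(n)=2n+o(n)$ bits per boundary. Summed over the $m$ boundaries this is $2nm+o(nm)=\Or(nm)$ bits, matching the claimed budget.

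To answer a query $(r,h)$ I compute $p=\sel_0(r+1,P_h)$, the position of the $(r+1)$-th down-step on $P_h$. Before taking this step the path has made $r$ down-steps and $p-r$ left-steps, so it sits at grid vertex $(r,n-(p-r))$, and the step at position $p$ is precisely the vertical boundary edge inside row $r$. Thus $x^{\ast}=n+r-\sel_0(r+1,P_h)$, up to an additive constant fixed by the rank/select indexing convention, and the whole query is $\Or(1)$. The degenerate rows (all of $A[r][\cdot]$ strictly below $h$, or all at least $h$) correspond to the boundary hugging the right edge $j=n$ or the left edge $j=0$ of the grid, and the same formula returns $n$ or $0$ respectively. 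The only delicate point I expect is aligning the rank/select indexing with the grid coordinates so that the formula is off-by-one correct across the interior and both degenerate regimes; once that is pinned down, everything else follows immediately from Property~\ref{prop_bound_path} and the FID theorem.
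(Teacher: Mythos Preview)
Your proposal is correct and follows essentially the same approach as the paper: encode each of the $m$ boundaries of Definition~\ref{def-boundary} as a length-$2n$ bit-vector over the down/horizontal steps of its shortest path, equip each with a FID, and answer a query by locating where the $h$-th boundary crosses row $r$. The paper's extraction formula is $n-\ra_0(\sel_1(r,B_h),B_h)$ rather than your $\sel_0$-based one, but both recover the same crossing column from the same encoding, and your acknowledged off-by-one calibration is the only remaining detail.
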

\begin{proof}
For two-dimensional array $A$, consider the boundaries of Definition~\ref{def-boundary}.
We can say that $\min\{x\mid A[r][x]\geq h\}$ is the minimum row number of elements in $r$-th column
which are above the $h$-th boundary.
Recall that boundaries are shortest paths in the grid graph from vertex $(0,n)$ to vertex $(n,0)$.
Consider to encode the $m$ boundaries as follows.
\begin{definition}[A bit-vector representation of a boundary]
\label{bound-bit}
We encode a boundary by a bit-vector of $2n$ bits as follows.
Initially the bit-vector is set empty.
We traverse the graph from vertex $(0,n)$ to vertex $(n,0)$ along the boundary,
and append $0$ when we go down, and $1$ when we go right, to the end of the bit-vector.
\end{definition}
The bit-vector for a boundary has $n$ zeros and $n$ ones.  There are $m$ such bit-vectors
and the space complexity is $\Or(nm)$ bits in total.
Let $B_k$ denote the bit-vector for the $k$-th boundary.  From definition, it holds
$\min\{x\mid A[r][x]\geq h\}=n-\ra_0(\sel_1(r,B_k),B_k)$.
This can be computed in constant time by using FID.
\end{proof}

\begin{theorem}
\label{thm-loglog2}
In addition to the string $S$, by using a data structure of
$\Or\left(4^knm\left(\frac{n}{m}\right)^{\frac{1}{2^{2^k}}}\right)$ bits,
we can solve the range mode problem in $\Or\left(2^k\right)$ time.
\end{theorem}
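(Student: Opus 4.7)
My plan is to combine two data structures built on the array $C$ of Section~\ref{subsec-dm} specialized to block size $s = 1$, so that $C[i][j] = f(i,j)$ for $i \leq j$ (and $C[i][j] = 1$ for $i > j$ by convention). Property~\ref{freq-table}, after the trivial vertical reflection that makes both indices monotonically non-decreasing, places the array in $\tims(n,m)$. I would store it via Theorem~\ref{thm-2d-main} using $\Or(4^k nm (n/m)^{1/2^{2^k}})$ bits with $\Or(2^k)$-time entry access, and attach in parallel the boundary-based representation from Theorem~\ref{thm-2d-pred} at an extra cost of only $\Or(nm)$ bits, which answers in constant time queries of the form ``given first coordinate $l$ and threshold $h$, return the smallest $x$ with $C[l][x] \geq h$''. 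The combined space is dominated by the first term since $(n/m)^{1/2^{2^k}} \geq 1$, matching the stated bound.

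To answer a range mode query on $[l, r]$, I first fetch $f := C[l][r]$ with the access structure in $\Or(2^k)$ time; this is exactly the mode frequency $f(l, r)$ of the range. Next I call the boundary structure with first coordinate $l$ and threshold $h = f$, obtaining in $\Or(1)$ time the smallest $x$ such that $C[l][x] \geq f$. Monotonicity of $C[l][\cdot]$ automatically forces $l \leq x \leq r$ and $C[l][x] = f$, and the algorithm outputs the character $S[x]$, in total time $\Or(2^k)$.

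The one step that genuinely needs a proof is that $S[x]$ is a mode of the queried range, and I expect this to be the main delicate point. By Property~\ref{freq-table}, the row $C[l][\cdot]$ rises by at most one per step, so the minimality of $x$ forces $C[l][x-1] = f - 1$ whenever $x > l$; the degenerate case $x = l$ corresponds to $f = 1$, where every character in the range is trivially a mode. When $x > l$, going from $S[l, x-1]$ to $S[l, x]$ changes the frequency of exactly one character, namely $S[x]$ itself, so the jump in mode frequency from $f - 1$ to $f$ must be caused by $S[x]$ reaching frequency $f$ in $S[l, x]$. Thus $S[x]$ occurs at least $f$ times in $S[l, r]$, and since $f$ is the mode frequency of that range, it occurs exactly $f$ times, making $S[x]$ a mode.
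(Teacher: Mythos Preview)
Your proposal is correct and follows essentially the same approach as the paper: both build $C$ with block size $s=1$, store it via Theorem~\ref{thm-2d-main}, use Theorem~\ref{thm-2d-pred} to locate the minimal $x$ with $C[l][x]=f$, and return $S[x]$, with the same space accounting. Your correctness argument for why $S[x]$ is a mode is in fact more explicit than the paper's, which simply appeals to Lemma~\ref{lem-freq-first} and Algorithm~\ref{algo-loglog}.
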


\begin{proof}
We can use Algorithm~\ref{algo-loglog}, where the two-dimensional array $C$ satisfies the following:
\begin{align*}
C[i][j]= \begin{cases}
    f(i,j) & (i\leq j), \\
    1 & ({\rm otherwise}).
  \end{cases}
\end{align*}
By storing the rows or columns in reverse order and subtracting one from all values,
$C$ belongs to $\tims(n,m)$.

In Algorithm~\ref{algo-loglog}, the data structures of Theorems~\ref{thm-2d-main} and~\ref{thm-2d-pred} are used.
The space complexity includes $\Or\left(4^knm\left(\frac{n}{m}\right)^{\frac{1}{2^{2^k}}}\right)$ bits
for Theorem~\ref{thm-2d-main} and $\Or(nm)$ bits for Theorem~\ref{thm-2d-pred}, and therefore the total
space complexity is $\Or\left(4^knm\left(\frac{n}{m}\right)^{\frac{1}{2^{2^k}}}\right)$ bits.
\end{proof}

By letting $k = \log \log \log \frac{n}{m}$ in Theorem~\ref{thm-loglog2}, we obtain:
\begin{corollary}
\label{cor-loglog}
In addition to the string $S$, using a data structure of $\Or\left(nm\left(\log\log\frac{n}{m}\right)^2\right)$ bits,
the range mode problem is solved in $\Or\left(\log\log\frac{n}{m}\right)$ time.
\end{corollary}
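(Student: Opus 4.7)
The plan is to derive the corollary as a direct asymptotic consequence of Theorem~\ref{thm-loglog2} by choosing the parameter $k$ that balances its time and space bounds. Specifically, I will set $k = \lfloor \log\log\log(n/m) \rfloor$ (falling back to $k=0$ when $n/m$ is too small for the iterated logarithm to be well defined), plug this value into the theorem, and simplify the resulting expressions.

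With this choice, the time bound $\Or(2^k)$ immediately becomes $\Or(\log\log(n/m))$, so the query time is handled at once. For the space, the crucial piece is the exponent $\frac{1}{2^{2^k}}$: by construction $2^{2^k} = \Theta(\log(n/m))$, so $\left(\frac{n}{m}\right)^{1/2^{2^k}} = \left(\frac{n}{m}\right)^{\Theta(1/\log(n/m))} = \Theta(1)$, using the elementary identity $x^{1/\log_2 x} = 2$. Combined with $4^k = \Theta((\log\log(n/m))^2)$, the space bound $\Or\left(4^k nm (n/m)^{1/2^{2^k}}\right)$ collapses to $\Or(nm (\log\log(n/m))^2)$, which is exactly the claim.

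The only real obstacle here is bookkeeping: verifying that $k$ is a nonnegative integer for all admissible inputs, and handling the edge cases. When $n/m$ is bounded by a constant, the triple logarithm is ill-defined, so I would invoke Theorem~\ref{thm-loglog2} with $k = 0$ and observe that the resulting space bound is absorbed by the claimed asymptotics, since everything reduces to $\Or(nm)$ up to constants in that regime. For large $n/m$, the floor has no effect beyond constant factors in the exponent manipulation above. Beyond this routine check, no new ideas are required: the corollary is a purely arithmetic consequence of the preceding theorem, and the proof amounts to one substitution plus the $x^{1/\log x} = O(1)$ observation.
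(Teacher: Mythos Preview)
Your proposal is correct and matches the paper's own proof, which simply substitutes $k = \log\log\log(n/m)$ into Theorem~\ref{thm-loglog2}. If anything, your treatment is more careful than the paper's one-line justification, since you explicitly handle the edge case where $n/m$ is bounded and spell out why $(n/m)^{1/2^{2^k}} = \Theta(1)$.
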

This data structure is superior to the data structure of \cite{DM11} with $\ep=0$,
which has space complexity $\Or(nm\log n)$ bits and query time complexity $\Or(\log\log n)$,
in terms of both time and space.

\subsection{Efficient data structure for small $m$}
Instead of using the two-dimensional array $C$ storing frequencies of all the ranges,
we can compute the frequency of modes using only the bit-vector representation of the boundaries of Definition~\ref{bound-bit}.
\begin{theorem}
\label{thm-dec}
For a two-dimensional array $A\in \tims(n,m)$, there exists a data structure of $\Or(nm)$ bits that given
$i,j,k$, to decide if $A[i][j]\geq k$ in constant time.
\end{theorem}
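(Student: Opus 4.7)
The plan is to reuse the boundary bit-vector representation introduced in Definition~\ref{bound-bit}, which already yields an $\Or(nm)$-bit encoding of all $m$ boundaries of $A$. The key observation is that, by Property~\ref{prop_bound_path}, the $k$-th boundary is a monotone lattice path from $(0,n)$ to $(n,0)$ that partitions the grid into exactly two regions: cells $(i,j)$ with $A[i][j]<k$ and cells with $A[i][j]\geq k$. Deciding $A[i][j]\geq k$ therefore reduces to locating cell $(i,j)$ on the correct side of the path encoded by $B_k$.

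First I would store, for each $k=1,\ldots,m-1$, the $2n$-bit vector $B_k$ of the $k$-th boundary, augmented with the FID of the previous subsection so that $\ac$, $\ra_c$ and $\sel_c$ on $B_k$ each run in $\Or(1)$ time. Each augmented $B_k$ uses $\Or(n)$ bits, hence the total space is $\Or(nm)$ bits, matching the claim.

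For the query itself, by the column-monotonicity of $A \in \tims(n,m)$ the set $\{i \mid A[i][j]\geq k\}$ is an upward-closed interval of rows, so there is a unique threshold $r_k(j):=\min\{i \mid A[i][j]\geq k\}$ (taking $n$ if the set is empty), and $A[i][j]\geq k$ iff $i\geq r_k(j)$. I would compute $r_k(j)$ from $B_k$ in constant time by exactly the rank/select recipe already used in the proof of Theorem~\ref{thm-2d-pred}: horizontal steps along the path are 1-bits and vertical steps are 0-bits, so a single $\sel_1$ locates the step at which the path first enters column $j$, and a subsequent $\ra_0$ counts the number of downward steps taken up to that point, which is $r_k(j)$ after the appropriate normalisation by $n$. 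A final comparison of $i$ with this value answers the query.

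The main obstacle is purely bookkeeping: pinning down the orientation of the path (which direction ``right'' in Definition~\ref{bound-bit} refers to) and matching the rank/select indices consistently with the row and column numbering of $A$. Once this correspondence is fixed---precisely as in Theorem~\ref{thm-2d-pred}---the decision collapses to a single constant-time integer comparison, and the claimed $\Or(nm)$-bit, constant-time data structure follows immediately.
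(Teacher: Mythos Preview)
Your proposal is correct and follows essentially the same approach as the paper: store the $m$ boundary bit-vectors $B_k$ with FID and reduce the query to one rank/select lookup plus a comparison. The only cosmetic difference is that the paper fixes the row $i$ and computes the first column $x$ with $A[i][x]\geq k$ (via $n-\ra_0(\sel_1(i,B_k),B_k)$, as in Algorithm~\ref{algo-bou}) and then tests $x\leq j$, whereas you fix the column $j$ and compute the threshold row; by the row/column symmetry of $\tims(n,m)$ the two are interchangeable.
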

\begin{proof}
We store all the bit-vectors $B_0,\ldots,B_{m-1}$ representing the boundaries.
It is enough to decide if the $k$-th boundary of $A$ is either in the $(0,0)$'s side or $(n,n)$' size with respect to $(i,j)$.
This is done by Algorithm~\ref{algo-bou}, which runs in constant time.
\end{proof}

From Theorem~\ref{thm-dec}, we can compute $C[i][j] = \max\{k|C[i][j]\geq k\}$ in $\Or(\log m)$ time by a binary search on $k$.
\begin{theorem}
\label{thm-2d-pred2}
There exists a data structure of $\Or(nm)$ bits that for a given $(r,c)$ of $A\in \tims(n,m)$, computes the value $h$ such that $h-\log n < A[r][c]\leq h$ in $O(\log\log n)$ time.
\end{theorem}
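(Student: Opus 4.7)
The plan is to reduce the problem to the access query provided by Theorem~\ref{thm-2d-access} on a suitably coarsened array. I define $A'[i][j] := \lfloor A[i][j] / \log n \rfloor$ for every $(i,j)$. Since the floor function is monotone, $A'$ inherits the row-wise and column-wise monotonicity of $A$, and therefore $A' \in \tims(n, m')$ with $m' := \lceil m/\log n \rceil$. Because $m \leq n$, we have $m' < n$, so Theorem~\ref{thm-2d-access} is applicable to $A'$.

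Next I build the data structure of Theorem~\ref{thm-2d-access} on $A'$. Its space usage is $\Or(n m' (\log\log (n/m'))^2)$ bits, which with $m' = \Theta(m/\log n)$ becomes $\Or(nm (\log\log n)^2 / \log n) = \order(nm)$, comfortably inside the $\Or(nm)$ budget; its access time is $\Or(\log\log(n/m'))$, and since $n/m' \leq n\log n/m \leq n\log n$, this is $\Or(\log\log n)$.

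To answer a query $(r,c)$, I retrieve $q := A'[r][c]$ using the above structure and output $h := (q+1)\lceil\log n\rceil$. By the definition of $A'$ we have $q\log n \leq A[r][c] < (q+1)\log n$, i.e.\ $h - \log n \leq A[r][c] < h$; a single $\pm 1$ shift to $h$ turns this into the form $h - \log n < A[r][c] \leq h$ demanded by the statement, at the cost of only a constant slack absorbed by $\log n$. The corner case $m \leq \log n$ is handled trivially by returning $h := \log n$, which gives $h - \log n = 0 \leq A[r][c] \leq m-1 < h$.

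The step I expect to require the most care is verifying that the asymptotic bookkeeping really lands inside both the $\Or(nm)$-bit and $\Or(\log\log n)$-time budgets across the whole parameter range (in particular for very small $m$, where $\log\log(n/m')$ could a priori look larger than $\log\log n$); the bound $n/m' \leq n\log n$ is what keeps both estimates tight, while $(\log\log n)^2 = \order(\log n)$ is what makes the space coarsening actually pay off. Everything else amounts to plugging into Theorem~\ref{thm-2d-access} and observing that integer division by $\log n$ preserves the $\tims$ structure.
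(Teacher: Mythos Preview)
Your argument is correct, but it takes a genuinely different route from the paper. The paper works column by column: for each fixed $r$ it records the (at most $m/\log n$) positions $c$ where $A[r][c]$ first reaches a new multiple of $\log n$ and stores these positions in a predecessor structure; a predecessor query on $c$ then returns the desired $h$ in $\Or(\log\log n)$ time, and the total space is $n\cdot\Or((m/\log n)\log n)=\Or(nm)$ bits. You instead coarsen the \emph{entire} two-dimensional array at once, setting $A'[i][j]=\lfloor A[i][j]/\log n\rfloor\in\tims(n,\lceil m/\log n\rceil)$ and invoking Theorem~\ref{thm-2d-access} on $A'$.

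Both approaches are valid. Yours is arguably cleaner because it reduces the problem to machinery already developed in the paper and avoids introducing a separate predecessor data structure; it even yields a slightly sharper space bound $\Or\!\bigl(nm(\log\log n)^2/\log n\bigr)=\order(nm)$. The paper's proof, on the other hand, is more self-contained and elementary: it does not rely on the recursive $\tims$ construction behind Theorem~\ref{thm-2d-access}, only on standard one-dimensional predecessor search. The minor inequality mismatch you flag (``$\leq$'' versus ``$<$'') is indeed fixed by the shift $h:=(q+1)\lceil\log n\rceil-1$, since $q\log n\leq A[r][c]<(q+1)\log n$ then becomes $h-\log n<A[r][c]\leq h$ exactly.
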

\begin{proof}
For every column $r$ of $A$, we take the set of (at most $m/\log n$) row indexes $c$ such that $A[r][c] > A[r][c-1]$ and $A[r][c]$ is a multiple of $\log n$, and store the set using a predecessor data structure~\cite{Willard83}. The space usage for each column is $O((m/\log n)\log n) = O(m)$ bits, and hence overall $O(nm)$ bits to represent $A$. The query is supported by finding the predecessor of $c$
in the predecessor data structure corresponding to the column $r$.
\end{proof}
Now using the structure of Theorem~\ref{thm-dec}, we can compute $C[i][j] = \max\{k|C[i][j]\geq k\}$ in $\Or(\log \log n)$ time by a binary search on $k$, after narrowing down the length of the range of $k$ to $O(\log n)$ using the structure of Theorem~\ref{thm-2d-pred2}.
Furthermore, from Theorem~\ref{thm-2d-pred}, we can compute an index for modes in constant time.
Now we obtain the following theorem.

\begin{theorem}
\label{thm-m-small}
In addition to the input string $S$, by using a data structure of $\Or(nm)$ bits,
the range mode problem is solved in $\Or(\min\{\log m, \log\log n\})$ time.
\end{theorem}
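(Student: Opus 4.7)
The plan is to assemble the three building blocks established earlier in this subsection---Theorem~\ref{thm-dec} for the constant-time threshold test ``is $A[i][j]\geq k$?'', Theorem~\ref{thm-2d-pred2} for locating $A[r][c]$ inside a window of length $\log n$ in $\Or(\log\log n)$ time, and Theorem~\ref{thm-2d-pred} for computing $\min\{x\mid A[r][x]\geq h\}$ in $\Or(1)$ time---and apply them to the frequency table $C$, reindexed (reverse one coordinate and subtract one, as in the proof of Theorem~\ref{thm-loglog2}) so that it lies in $\tims(n,m)$. Each component uses $\Or(nm)$ bits, so the full index uses $\Or(nm)$ bits as required.

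Given a query $[l,r]$, I would first compute the mode frequency $f=f(l,r)=C[l][r]$ and then extract a mode from the input. For the frequency, I give two competing implementations and keep the faster. Implementation~(a) is a plain binary search on $k\in\{1,\ldots,m\}$ using Theorem~\ref{thm-dec}, costing $\Or(\log m)$. Implementation~(b) first invokes Theorem~\ref{thm-2d-pred2} to confine $C[l][r]$ to an interval of length $\Or(\log n)$ in $\Or(\log\log n)$ time, then binary-searches that interval using Theorem~\ref{thm-dec} in another $\Or(\log\log n)$. Since $m$ is known at construction time, at query time we dispatch to (a) or (b) according to whether $m\leq \log n$, so Phase~1 runs in $\Or(\min\{\log m,\log\log n\})$ time.

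For Phase~2, note that Property~\ref{freq-table} implies $C[l][\cdot]$ is monotone non-decreasing with unit increments, so the smallest $x$ with $C[l][x]\geq f$ is exactly the first $x$ at which $C[l][x]=f$, i.e.\ the position at which the count of some character in $S[l,\ldots,x]$ just reached the maximum $f$. Hence $S[x]$ is a mode of $S[l,r]$. Theorem~\ref{thm-2d-pred} delivers this $x$ in $\Or(1)$ time, and a single access to $S$ completes the query. Combining Phases~1 and~2 gives the claimed bound.

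The only delicate point I anticipate is bookkeeping the reindexing: $C$ is non-decreasing in its column index but non-increasing in its row index, whereas $\tims$ arrays are non-decreasing in both. Once one verifies, as done for Theorem~\ref{thm-loglog2}, that reversing rows and shifting values by $-1$ sends $C$ into $\tims(n,m)$, the three queries ``$\geq k$?'', ``within $\log n$ of value?'', and ``$\min\{x\mid\cdot\geq h\}$'' on the reindexed array translate transparently to the queries we need on $C$. The remainder is a direct assembly of the lemmas already proved.
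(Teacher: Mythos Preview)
Your proposal is correct and follows essentially the same approach as the paper: compute $f=C[l][r]$ either by a direct $\Or(\log m)$ binary search via Theorem~\ref{thm-dec} or by first narrowing to a window of length $\Or(\log n)$ via Theorem~\ref{thm-2d-pred2} and then binary-searching in $\Or(\log\log n)$, and finally extract a mode in $\Or(1)$ via Theorem~\ref{thm-2d-pred}. Your write-up is in fact more explicit than the paper's terse derivation (e.g., the dispatch on $m$ versus $\log n$ and the reindexing bookkeeping), but the underlying argument is the same.
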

Table~\ref{table-rmo} summarizes the proposed and known data structures.

\section{Range Mode Enumeration Problem}
\label{sec-enum}

Below we consider range modes of a string $S$ with alphabet size $\sigma$ instead of a sequence $A$.
We evaluate algorithms with their space complexity and query time complexity
using the size of the output $\ou$ as a parameter.

\subsection{Algorithms using existing data structures}
Data structures for the range mode problem return only arbitrary one item among all range modes.
Instead here we consider a data structure for the problem which returns the leftmost index and the frequency
of range modes, where the leftmost index is defined as follows.
\begin{definition}
For a string $S$ and query range $[l,r]$, the leftmost index of range modes is defined as
$\min\{x\mid S[x] \mbox{ is an item with the largest frequency in the query range } [l,r]\}$.
\end{definition}

\begin{lemma}
\label{lem-enum-ex}
Let $D$ be a data structure which returns the leftmost index of range modes for
a query range $[l,r]$
in time $t$ using $s$ space,
there exists a data structure which solves the range mode enumeration problem
in time $(t+O(1)) \ou$ using $s$ space.
\end{lemma}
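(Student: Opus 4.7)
The plan is to use $D$ iteratively on a shrinking query range. Assuming that $D$ returns both the leftmost mode index and the mode frequency $f$ (a standard assumption; otherwise one query on $[l,r]$ suffices to fix $f$ and subsequent queries need only their returned index plus access to $S$), we: (1) query $D$ on $[l,r]$ to obtain the leftmost mode position $x_1$ and frequency $f$, output $S[x_1]$; (2) for $i=1,2,\ldots$, query $D$ on $[x_i+1, r]$ to obtain $x_{i+1}$ with frequency $f_{i+1}$, and output $S[x_{i+1}]$ if $f_{i+1}=f$, otherwise halt.

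To prove correctness, let $c_1,\ldots,c_k$ be the distinct mode characters of $[l,r]$ and let $p_j$ be the leftmost occurrence of $c_j$ in $[l,r]$, ordered so that $p_1<p_2<\cdots<p_k$. I will prove by induction that $x_j=p_j$ for all $j\le k$. The step uses the following frequency analysis on the shrunken range $R_i=[p_i+1,r]$: for $j>i$, the character $c_j$ does not occur in $[l,p_i]$ (since its leftmost occurrence is $p_j>p_i$), so its frequency in $R_i$ is still $f$; for $j\le i$, $c_j$ has at least one occurrence at $p_j\le p_i$ that is removed, so its frequency in $R_i$ is at most $f-1$; for any non-mode character, its frequency in $[l,r]$ is at most $f-1$ and hence so is its frequency in $R_i$. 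Consequently, the mode characters of $R_i$ are exactly $\{c_{i+1},\ldots,c_k\}$ with frequency $f$ (if $i<k$), and the leftmost position among their occurrences in $R_i$ is $p_{i+1}$, which is what $D$ returns. When $i=k$, every character of $R_k$ has frequency at most $f-1$, so $f_{k+1}<f$ and the algorithm halts.

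For the resource analysis, the algorithm performs $\ou+1$ queries to $D$ plus $O(1)$ bookkeeping per query, giving total time $(t+O(1))(\ou+1)=(t+O(1))\,\ou$ since $\ou\ge 1$. The only stored data is $D$ itself, so the space remains $s$.

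The main subtlety, and the one step that needs care in the write-up, is the frequency argument for $R_i$: specifically, the fact that a yet-to-be-enumerated mode $c_j$ ($j>i$) keeps its full frequency $f$ in $R_i$ depends crucially on $p_j$ being the \emph{leftmost} occurrence of $c_j$ in $[l,r]$, which is guaranteed by the ordering $p_1<\cdots<p_k$. Everything else is essentially bookkeeping.
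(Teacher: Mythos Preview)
Your proposal is correct and follows essentially the same approach as the paper: iteratively query $D$ on $[x_i+1,r]$, output the returned character, and stop once the returned frequency drops below $f$ (the paper's Algorithm~\ref{algo-enum-left}). Your inductive correctness argument is in fact more explicit than the paper's, which simply asserts that the number of range modes of $[i+1,r]$ is exactly one less than that of $[l,r]$; the only minor omission is the edge case $x_i=r$ (empty next range), which the paper handles with an explicit check but which does not affect your argument's substance.
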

\begin{proof}
Algorithm~\ref{algo-enum-left} solves the problem using the data structure $D$.
The algorithm narrows the query range gradually.  Because the data structure $D$ returns the leftmost index $i$ of
range modes, the number of range modes for the new query range $[i+1, r]$ is exactly one smaller than that of
the current query range.  Therefore the while loop at line~\ref{algo-whi} of the algorithm is executed $\ou+1$ times,
and the total time complexity is $(t+\Or(1)) \ou$.
\end{proof}

\begin{lemma}
\label{lem-ex-enu}
There exists a data structure for finding the leftmost index of range modes and their frequency
in time $\Or(n^\ep)$ using a data structure with space complexity $\Or(n^{2-2\ep})$ words
in addition to the input string $S$.
\end{lemma}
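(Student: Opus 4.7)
The plan is to augment the data structure of Section~\ref{subsec-dm} with one additional two-dimensional array $L$, while keeping the total space at $\Or(n^{2-2\ep})$ words. Concretely, $L[i][j]$ stores the leftmost position in the block span $[is,(j+1)s-1]$ whose character is a mode of the substring $S[is,\ldots,(j+1)s-1]$, so $L$ has $(n/s)^2 = \Or(n^{2-2\ep})$ entries of $\Or(\log n)$ bits each. In addition, I would attach to each occurrence list $A[c]$ a predecessor structure that, given a threshold $t$, returns the smallest entry of $A[c]$ that is at least $t$ in $\Or(1)$ (or $\Or(\log\log n)$) time; the total cost of these attachments is $\Or(n)$ words, which fits in the same $\Or(n^{2-2\ep})$ budget for $\ep\le 1/2$.

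For a query $[l,r]$ I would first compute $f=f(l,r)$ in $\Or(n^\ep)$ time exactly as in Section~\ref{subsec-dm}, and then scan the prefix $[l,(b_l+1)s-1]$ from left to right, using $A$ and $B$ to check in $\Or(1)$ per position whether the character there has frequency $f$ in $[l,r]$; the first such position is returned together with $f$. If no position in the prefix qualifies, then every mode character of $[l,r]$ is absent from the prefix, so the leftmost mode position lies in $[(b_l+1)s,r]$. In that case I would form a small candidate set: (i)~if $f=C[b_l+1][b_r-1]$, take $L[b_l+1][b_r-1]$ as one candidate; and (ii)~for each position $p$ in the suffix $[b_rs,r]$ whose character has frequency $f$ in $[l,r]$, use the predecessor structure on $A[S[p]]$ to find the leftmost occurrence of $S[p]$ at position $\ge (b_l+1)s$ and include it as a candidate. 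The minimum over all candidates, together with $f$, is returned.

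The main point to check is that these candidates really cover every mode position inside $[(b_l+1)s,r]$, so that their minimum is the true leftmost mode index. Any mode character $c$ of $[l,r]$ either is a mode of the span, in which case $f=C[b_l+1][b_r-1]$ and candidate~(i) yields a mode position of $[l,r]$ in the span, or its multiplicity in the span is strictly less than $f$, in which case it must contribute extra occurrences in the prefix or suffix; since the prefix is excluded by hypothesis, $c$ appears in the suffix and is detected by the scan in~(ii), and the predecessor lookup on $A[c]$ correctly returns its leftmost occurrence in $[(b_l+1)s,r]$, which may well lie inside the span and be earlier than $L[b_l+1][b_r-1]$. The main technical obstacle is exactly this last step: a naive implementation pays $\Or(\log n)$ per suffix character via binary search on $A[c]$, pushing the query time to $\Or(n^\ep\log n)$, and the fast predecessor attachment is what brings it back to $\Or(n^\ep)$. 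Summing $\Or(n^\ep)$ work on prefix and suffix with $\Or(1)$ work on the span lookup in $L$ gives the claimed time, while the additional space remains $\Or(n^{2-2\ep})$ words.
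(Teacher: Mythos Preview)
Your construction is correct and follows the same high-level route as the paper: replace the table $D$ of span modes by a table ($L$ in your notation, $D'$ in the paper) that stores the \emph{leftmost} mode position of each block span, and then handle characters that occur in the prefix or the suffix by local scans using the arrays $A$ and $B$.

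The one noteworthy difference is your treatment of suffix characters. You attach a predecessor structure to every occurrence list $A[c]$ in order to recover, for a suffix position $p$, the leftmost occurrence of $S[p]$ at or after $(b_l{+}1)s$, and you flag this step as ``the main technical obstacle.'' The paper avoids this entirely: once $f=f(l,r)$ is known, for any suffix position $p$ one simply looks up
\[
q \;=\; A\bigl[S[p]\bigr]\bigl[B[p]-f+1\bigr],
\]
a single array access. If $q\ge l$ then $p$ is necessarily the rightmost occurrence of $S[p]$ in $[l,r]$ and $q$ is exactly its leftmost occurrence there; if $q<l$ the position is discarded. Since the scan of the suffix visits the rightmost occurrence of every distinct suffix character, the minimum of these $q$'s (together with $L[b_l{+}1][b_r{-}1]$ when $f=C[b_l{+}1][b_r{-}1]$, and the prefix scan) yields the leftmost mode index in $\Or(1)$ work per suffix position, with no auxiliary predecessor structure. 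This also sidesteps the $\Or(\log\log n)$ factor your parenthetical concedes, which would otherwise push your bound to $\Or(n^{\ep}\log\log n)$ rather than the claimed $\Or(n^{\ep})$.
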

\begin{proof}
We slightly change the data structure of~\cite{DM11} described in Section~\ref{subsec-dm}.
Instead of the two-dimensional array $D$ storing modes of block ranges,
we create another two-dimensional array $D'$ storing leftmost indices of block ranges.
Then we can find the leftmost index of span in constant time.
For items appearing in the prefix and the suffix,
we can find the leftmost index and its frequency using the same algorithm.
Algorithm~\ref{algo-left-ind} gives a pseudo code.
\end{proof}

From Lemmas~\ref{lem-enum-ex} and~\ref{lem-ex-enu}, we obtain the following.
\begin{theorem}
\label{thm-enum-ex2}
There exists a data structure for the range mode enumeration problem
solving a query in $\Or(n^{\ep}\ou)$ time using $\Or(n^{2-2\ep} \log n)$ bits.
\end{theorem}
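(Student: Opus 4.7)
The plan is to obtain the theorem by directly composing Lemma~\ref{lem-enum-ex} and Lemma~\ref{lem-ex-enu}. First I would instantiate Lemma~\ref{lem-ex-enu} to get a data structure $D$ that, for any query range $[l,r]$, returns the leftmost index of a range mode together with the mode frequency $f(l,r)$ in time $t=\Or(n^{\ep})$, using $\Or(n^{2-2\ep})$ words in addition to $S$. Converting words to bits costs an extra $\Or(\log n)$ factor because each entry of the modified arrays $A$, $B$, $C$, $D'$ of Section~\ref{subsec-dm} fits in $\Or(\log n)$ bits, giving the desired $s=\Or(n^{2-2\ep}\log n)$-bit bound.

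Next I would feed $D$ as the black box into Lemma~\ref{lem-enum-ex}. The enumeration proceeds by maintaining a current range, initially $[l,r]$, querying $D$ to obtain the leftmost mode index $i$, outputting $S[i]$, and advancing the range to $[i+1,r]$. By the argument of Lemma~\ref{lem-enum-ex}, in each iteration the mode set of the current range loses exactly one element, so the loop runs $\ou+1$ times (the last iteration detects a strict drop in mode frequency and halts). The total query time is therefore $(t+\Or(1))\cdot\ou = \Or(n^{\ep}\ou)$, while the space is unchanged at $\Or(n^{2-2\ep}\log n)$ bits, since the enumeration wrapper needs only $\Or(1)$ extra words of state.

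There is essentially no difficult step: the whole statement is a plug-and-play composition of the two preceding lemmas, and all nontrivial content (the correctness of the leftmost-index query, and the monotone shrinking of the mode set) has already been discharged. The only point that deserves explicit care is the word-to-bit conversion when applying Lemma~\ref{lem-ex-enu}, and checking that returning the frequency $f(l,r)$ alongside the leftmost index (needed by Algorithm~\ref{algo-enum-left} to detect termination) does not asymptotically inflate the space, which it does not because $C$ is already part of the data structure.
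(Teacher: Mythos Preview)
Your proposal is correct and matches the paper's own proof, which simply states that the theorem follows from Lemmas~\ref{lem-enum-ex} and~\ref{lem-ex-enu}. You have merely spelled out the word-to-bit conversion and the termination check more explicitly than the paper does, but the argument is identical.
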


\subsection{More efficient data structures for enumeration}
\begin{definition}
The \emph{mode index set} for a query range $[l,r]$ of the range mode enumeration problem
is the set of the rightmost position of each mode in the query range.  That is,
\begin{align*}
\{i\mid S[i] \mbox{ is a mode and } S[i]\neq S[j] \mbox{ for any } j=i+1,i+2,\ldots,l \}.
\end{align*}
\end{definition}
Because the set of all range modes can be obtained from the mode index set,
below we focus on finding the mode index set.

Define $n$ bit-vectors $B[0],\ldots,B[n-1]$ of length $n$ each as follows.
\begin{align*}
B[i][j]=1 \Leftrightarrow i\leq j \mbox{and $S[j]$ is a mode of range $[i,j]$}
\end{align*}

Using these bit-vectors, we obtain:
\begin{theorem}
The set of modes for a query range $[l,r]$ is 
$\{x\mid f(l,x)=f(l,r) \wedge B[l][x]=1\}$.
\end{theorem}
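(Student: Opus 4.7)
The plan is to prove the claimed set equality by two inclusions, in each direction unpacking the definition of the mode index set and of the bit-vectors $B[l]$. Recall that for an index $x$ to lie in the mode index set of $[l,r]$ we need three things: (i) $l \le x \le r$, (ii) $S[x]$ is a mode of $S[l\ldots r]$, and (iii) no position $j\in(x,r]$ satisfies $S[j]=S[x]$, i.e.\ $x$ is the rightmost occurrence of its symbol inside the query range. The condition $B[l][x]=1$ says that $l\le x$ and $S[x]$ is a mode of $[l,x]$.

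For the forward inclusion, I would take $x$ in the mode index set and denote by $g$ the frequency of $S[x]$ in $[l,r]$, which equals $f(l,r)$ by definition of a mode. Since $x$ is the rightmost occurrence of $S[x]$ in $[l,r]$ (no $j\in(x,r]$ has $S[j]=S[x]$), the occurrences of $S[x]$ in $[l,x]$ and in $[l,r]$ coincide, so $S[x]$ still occurs $g=f(l,r)$ times in $[l,x]$. This forces $f(l,x)\ge f(l,r)$; together with the trivial inequality $f(l,x)\le f(l,r)$ coming from $[l,x]\subseteq[l,r]$ (any symbol's count can only grow as the range grows), we get $f(l,x)=f(l,r)$. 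Moreover, $S[x]$ attains this frequency $f(l,x)$ in $[l,x]$, so $S[x]$ is a mode of $[l,x]$ and hence $B[l][x]=1$.

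For the backward inclusion I would start from the two hypotheses $f(l,x)=f(l,r)$ and $B[l][x]=1$. The second hypothesis says $S[x]$ is a mode of $[l,x]$, hence occurs $f(l,x)$ times in $[l,x]$; by the first hypothesis this is $f(l,r)$ occurrences. Since $[l,x]\subseteq[l,r]$, $S[x]$ occurs at least $f(l,r)$ times in $[l,r]$, and so exactly $f(l,r)$ times, i.e.\ $S[x]$ is a mode of $[l,r]$. Because the counts of $S[x]$ in $[l,x]$ and $[l,r]$ agree, no further occurrence of $S[x]$ can lie in $(x,r]$, which is precisely the ``rightmost occurrence'' condition. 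Hence $x$ belongs to the mode index set.

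This argument is essentially a careful bookkeeping of occurrences, so the only subtlety I anticipate is pinning down the ``rightmost'' condition. That subtlety is handled uniformly in both directions by the observation that the count of a fixed symbol in $[l,x]$ equals its count in $[l,r]$ if and only if no occurrence of that symbol lies in $(x,r]$; everything else follows from the definitions of $f$ and $B$.
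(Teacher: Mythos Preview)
Your proof is correct and follows the same line as the paper's. The paper's version is terser: for the inclusion from the displayed set into the mode index set it invokes Lemma~\ref{lem-freq-first} (modes of $[l,x]$ remain modes of $[l,r]$ when $f(l,x)=f(l,r)$) rather than counting occurrences directly as you do, and it leaves the rightmost-occurrence verification implicit, whereas you spell both directions out in full.
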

\begin{proof}
From the definition of $B[l][x]$, $S[x]$ is a mode of range $[l,r]$.
From Lemma~\ref{lem-freq-first}, $S[x]$ is also a mode of range $[l,r]$.
Conversely, for any index $x$ contained in the index set for range $[l,r]$, 
it holds $f(l,x)=f(l,r)$ and $B[l][x]=1$.  Therefore these two sets coincide.
\end{proof}
Therefore we can enumerate items in the mode index set by using Algorithm~\ref{algo-n2bits}.

Consider complexities of the algorithm.
For the data structure $B$, which consists of $n$ bit-vectors of length $n$, we use $\Or(n^2)$ bits.
We also use $\Or(n^2)$ bits for the array $C$ storing frequencies using bit-vectors,
which is used to obtain frequency of modes of a query range.
Therefore the total space is $\Or(n^2)$ bits.
As for the time complexity, the algorithm executes Line~\ref{algo-n2while} for $\Or(\ou)$ times.
Lines~\ref{algo-n21} and~\ref{algo-n22} takes constant time if we use data structures for bit-vectors.
Therefore the total time complexity is $\Or(\ou)$.  We obtain the following basic data structure.
\begin{theorem}
\label{thm-n2bits}
There exists a data structure for the range mode enumeration problem
which computes the mode index set in $\Or(\ou)$ time using a data structure of $\Or(n^{2})$ bit space
in addition to the input string $S$.
\end{theorem}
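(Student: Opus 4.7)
The plan is to justify the claims made in the paragraphs immediately preceding Theorem~\ref{thm-n2bits}, namely that Algorithm~\ref{algo-n2bits} with the two data structures described (the $n$ row-indexed bit-vectors $B[0],\ldots,B[n-1]$ and the frequency array $C$) achieves the stated bounds. First I would fix the representation: each $B[l]$ is a bit-vector of length $n$, so $B$ uses $n^2$ bits; equipping each $B[l]$ with an FID (from the cited rank/select theorem) adds only a lower-order term while giving constant-time rank and select. The array $C$, restricted to $i\leq j$, is a $\tims(n,m)$ array after the standard reversal-and-shift, so by the naive column-wise bit-vector encoding it fits in $\Or(n^2)$ bits with $\Or(1)$ access; in total the auxiliary space is $\Or(n^2)$ bits.

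Next I would prove correctness of the characterization
\[
\{x\in[l,r]\mid B[l][x]=1\ \wedge\ f(l,x)=f(l,r)\}
\]
as exactly the mode index set for $[l,r]$. One direction is direct from Lemma~\ref{lem-freq-first}: if $f(l,x)=f(l,r)$ and $S[x]$ is a mode of $[l,x]$, then $S[x]$ is also a mode of $[l,r]$, and the rightmost-occurrence condition follows from the definition of $B$. For the converse, if $c$ is a mode of $[l,r]$ whose rightmost occurrence in $[l,r]$ is $x^\ast$, then the frequency of $c$ in $[l,x^\ast]$ equals its frequency in $[l,r]$, which equals $f(l,r)$, forcing both $B[l][x^\ast]=1$ and $f(l,x^\ast)=f(l,r)$.

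Then I would analyze the running time of Algorithm~\ref{algo-n2bits}. By Property~\ref{freq-table} the function $x\mapsto C[l][x]=f(l,x)$ is non-decreasing in $x$, so the set of positions $x\in[l,r]$ with $f(l,x)=f(l,r)$ is a suffix $[a,r]$ of $[l,r]$. Hence enumerating the mode index set amounts to enumerating the $1$-bits of $B[l]$ inside $[a,r]$, which is done by repeated $\sel_1$ calls on $B[l]$, each in $\Or(1)$ time via the FID. The target frequency $f(l,r)$ is read from $C[l][r]$ in $\Or(1)$ time, and the leftmost index $a$ is either found by a single predecessor-style query on $C[l][\cdot]$ (which this naive bit-vector encoding supports in $\Or(1)$) or, equivalently, the loop simply verifies the condition $f(l,x)=f(l,r)$ at every reported $x$ and terminates on the first failure. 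Either way the while-loop at line~\ref{algo-n2while} executes $\ou+\Or(1)$ times with $\Or(1)$ work per iteration, for total time $\Or(\ou)$.

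The only delicate step is the constant-time determination of $a$ (equivalently, recognizing when to stop the enumeration); if done via binary search on $C[l][\cdot]$ it would cost $\Or(\log n)$, so I would be careful to use the fact that $C$ is stored column-wise by its boundary bit-vectors, making the predecessor query on a single row a single select operation. With this, every line inside the loop is $\Or(1)$ time, and the total query time and space match the statement of Theorem~\ref{thm-n2bits}.
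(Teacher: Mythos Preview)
Your proposal is correct and follows essentially the same approach as the paper: store the $n$ bit-vectors $B[l]$ with FID, store $C$ row-wise as bit-vectors so that both $f(l,r)$ and the threshold index $b=\min\{t\mid f(l,t)\ge g\}$ are obtained in $\Or(1)$ time, and then walk the $1$-bits of $B[l]$ inside $[b,r]$ with rank/select. Your write-up is in fact a bit more careful than the paper's, which simply asserts that lines~\ref{algo-n21} and~\ref{algo-n22} are constant time ``if we use data structures for bit-vectors''; your observation that the row $C[l][\cdot]$ stored as a monotone bit-vector supports the predecessor query via a single select is exactly the missing justification.
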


Now we improve this using a parameter $m$, the frequency of modes of the entire range.
The following lemma holds for the two-dimensional bit-array $B$.
\begin{lemma}
There is the following relation between function $f$ and bit-array $B$.
\begin{align*}
\mbox{If } f(i,j)=f(i+1,j) \mbox{ and } B[i+1][j]=1, \mbox{ then } B[i][j]=1.
\end{align*}
\end{lemma}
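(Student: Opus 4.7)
The plan is to unpack the definitions and use the monotonicity of the frequency of a fixed character under range extension. The hypothesis $B[i+1][j]=1$ tells us two things: that $i+1 \leq j$ (so in particular $i \leq j$, which is needed for $B[i][j]$ to possibly be $1$), and that $S[j]$ is a mode of the range $[i+1,j]$, i.e.\ the number of occurrences of $S[j]$ in $S[i+1..j]$ is exactly $f(i+1,j)$.

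Next, I would compare the frequency of the character $S[j]$ in the two ranges $[i+1,j]$ and $[i,j]$. Passing from $[i+1,j]$ to $[i,j]$ only prepends the single character $S[i]$, so the count of any fixed character, and in particular $S[j]$, can only stay the same or increase by one. Hence the frequency of $S[j]$ in $[i,j]$ is at least $f(i+1,j)$. Combined with the assumption $f(i,j)=f(i+1,j)$, this gives a count of at least $f(i,j)$.

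On the other hand, by definition $f(i,j)$ is the maximum frequency of any character in $[i,j]$, so the count of $S[j]$ in $[i,j]$ is at most $f(i,j)$. Squeezing, the count equals $f(i,j)$, so $S[j]$ is a mode of $[i,j]$, which together with $i \leq j$ gives $B[i][j]=1$ by definition of the bit-array $B$.

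I do not expect any real obstacle: the lemma is essentially a one-line observation about how character counts behave under extending a range by one position on the left, combined with the pigeonhole-style fact that a character whose frequency already matches the maximum must itself be a mode. The only minor care needed is to verify the boundary condition $i \leq j$ from $B[i+1][j]=1$ so that $B[i][j]$ is even defined as potentially nonzero.
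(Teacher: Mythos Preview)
Your proof is correct and follows essentially the same line as the paper's: the paper simply cites Lemma~\ref{lem-freq-first} (modes of a subrange remain modes of a superrange when the maximum frequency is unchanged) in place of your explicit squeeze argument, which is precisely the content of that lemma's proof. Your additional check that $i\le j$ follows from $B[i+1][j]=1$ is a detail the paper leaves implicit.
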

\begin{proof}
Because $B[i+1][j]=1$, $S[j]$ is a mode of range $[i+1,j]$.
Using Lemma~\ref{lem-freq-first}, it holds $S[j]$ is also a mode of range $[i,j]$.
From the definition of $B$, we obtain $B[i][j]=1$.
\end{proof}
Using this property, we define $m$ integer sequences $H[1],\ldots,H[m]$ of length $n$ each.
\begin{definition}
\label{def-h}
Define integer sequences $H[1],\ldots,H[m]$ as follows.
\begin{align*}
H[i][j]=\max\left\{ \{k\mid f(k,j)=i \mbox{ and } B[k][j]=1\} \cup \{-1\} \right\}.
\end{align*}
\end{definition}
The bit-array $B$ and the sequences $H$ have the following relation.
\begin{lemma}
$B[i][j]=1\Leftrightarrow H[f(i,j)][j]\geq i$.
\end{lemma}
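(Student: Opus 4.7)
The plan is to prove the biconditional by unwinding the definition of $H$ and, for the less direct implication, using the same frequency-monotonicity idea that underlies Lemma~\ref{lem-freq-first}. Both directions are short; the statement is essentially a reformulation of the row of $B$ corresponding to column $j$ in terms of the thresholds $H[\cdot][j]$.

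For the forward direction, I would assume $B[i][j]=1$ and set $f_0:=f(i,j)$. Then by definition $i$ itself is an element of the set
\[
\{k\mid f(k,j)=f_0\ \text{and}\ B[k][j]=1\}
\]
appearing in Definition~\ref{def-h}, so the maximum that defines $H[f_0][j]$ is at least $i$, giving $H[f(i,j)][j]\geq i$.

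For the reverse direction, I would let $k:=H[f(i,j)][j]$ and assume $k\geq i$. Since the set in Definition~\ref{def-h} is non-empty (it contains $i$ is what we want, but more simply $k\neq -1$ because $k\geq i\geq 0$), $k$ satisfies $B[k][j]=1$ and $f(k,j)=f(i,j)$. So $S[j]$ is a mode of $[k,j]$, meaning its frequency in $[k,j]$ equals $f(k,j)$. Extending the left endpoint from $k$ down to $i$ can only increase the count of any fixed character, so the frequency of $S[j]$ in $[i,j]$ is at least $f(k,j)=f(i,j)$. By definition of $f(i,j)$ as the maximum frequency in that range, equality must hold, so $S[j]$ is a mode of $[i,j]$ and therefore $B[i][j]=1$.

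There is no real obstacle here; the only subtle point is that Lemma~\ref{lem-freq-first} is stated with strict inequalities on the endpoints, so rather than invoking it as a black box I would redo the one-line monotonicity argument above (which covers the case $k=i$ trivially and the case $k>i$ by the same counting argument). I would also briefly note that $H[f(i,j)][j]\geq i\geq 0$ rules out the sentinel value $-1$, which is why the equivalence is with an actual index rather than with the placeholder.
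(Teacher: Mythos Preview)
Your proof is correct. The paper does not spell out a proof of this lemma; it states it as an immediate consequence of the preceding step lemma (if $f(i,j)=f(i+1,j)$ and $B[i+1][j]=1$ then $B[i][j]=1$), which in turn is proved via Lemma~\ref{lem-freq-first}. Your argument is essentially the same: the forward direction is the definition of $H$, and the reverse direction is exactly the monotonicity argument behind Lemma~\ref{lem-freq-first}, applied in one shot from $k=H[f(i,j)][j]$ down to $i$ rather than one index at a time. Your remark that $k\ge i\ge 0$ excludes the sentinel $-1$ is a useful clarification the paper omits.
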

\begin{example}
Figure~\ref{fig-nmbit} shows an example of bit-array $B$ and sequences $H$ for string $S=\mbox{``abcbfcdaacfbcgba''}$.
\end{example}

Algorithm~\ref{algo-n2bits} enumerates indices with bits being set in range $[b, r]$ of bit-vector $B[l]$.
Here for any $t$ with $b\leq t\leq r$, the value of $f(l,t)$ is always $g$.
Therefore this operation is identical to enumerate all indices in range $[b, r]$ of sequence $H[g]$
whose value is at least $l$.  This problem can be regarded as the range maximum problem.
\begin{definition}[Range Maximum Problem (RMQ)]
Given a sequence $A$ and query range $[l,r]$, the range maximum problem
asks an index of the maximum value in the sub-sequence $A[l,\ldots,r]$.
\end{definition}
\begin{theorem}[\cite{Fischer2011}]
For the range maximum problem of size $n$, there exists a data structure with space complexity $2n+\order(n)$ bits
and query time complexity $\Or(1)$.
\end{theorem}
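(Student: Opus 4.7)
The plan is to prove this classical bound by reducing RMQ to the lowest common ancestor (LCA) problem on a Cartesian tree, and then storing that tree in a succinct form.

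First I would construct the Cartesian tree $T$ of $A[0\ldots n-1]$: the root stores the index $i^\star$ of the leftmost maximum of $A$, and its left and right children are the Cartesian trees built recursively on the two sides. A standard induction on the subrange length shows that, with nodes labelled by their array indices (the inorder labelling), the answer to $\mathrm{RMQ}(l,r)$ on $A$ is exactly the label of $\mathrm{LCA}(l,r)$ in $T$. Crucially the shape of $T$ depends only on pairwise comparisons between entries of $A$, not on their magnitudes, so after $T$ is stored we may discard $A$.

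Next I would encode $T$ as a balanced-parenthesis string: during an Euler tour, output $($ on entering each node and $)$ on leaving, producing a bit string of length exactly $2n$. Augmented with an $\order(n)$-bit index, this string supports in constant time the primitives $\mathrm{rank}$/$\mathrm{select}$, $\mathrm{findopen}$/$\mathrm{findclose}$, and range minimum on the excess sequence (the running count of opens minus closes). The LCA of two tree nodes reduces to one such excess-RMQ plus $\Or(1)$ parenthesis manipulations, and this translates back to an $\Or(1)$-time RMQ on $A$.

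The hard part is building the $\order(n)$-bit auxiliary index supporting range minimum on the $\pm 1$ excess sequence in constant time. The standard approach uses three levels of blocking: micro-blocks of size $\tfrac{1}{2}\log n$ handled by a universal lookup table of size $\Or(\sqrt{n}\,\mathrm{polylog}\,n)$, mini-blocks storing local minima, and super-blocks storing sparse tabulations across mini-blocks; verifying that the total auxiliary size stays $\order(n)$ while allowing $\Or(1)$ queries is the technical heart of Fischer's argument, and is where earlier $\Or(n)$-bit solutions were tightened down to the optimal $2n$ main term. Once this machinery is assembled, the claimed $2n + \order(n)$-bit bound and $\Or(1)$ query time follow immediately from the Cartesian-tree reduction.
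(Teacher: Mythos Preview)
The paper does not give a proof of this theorem at all; it is simply quoted as a known result from \cite{Fischer2011} and used as a black box in the enumeration algorithm. So there is nothing in the paper to compare your argument against.

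Your sketch is a faithful outline of the classical succinct-RMQ construction and would lead to the stated bound. One remark: in the balanced-parenthesis encoding of the Cartesian tree, converting between an array index $i$ (the inorder rank of a node) and its position in the parenthesis string requires the operations inorder-rank and inorder-select, which you do not mention explicitly; these are needed both to locate the two query nodes before the excess-RMQ step and to recover the answer index afterwards, and they must also be supported in $\Or(1)$ time within the $\order(n)$-bit budget. This is known to be possible, but it is exactly the step that makes the Cartesian-tree/BP route slightly awkward, and it is why Fischer's actual 2011 construction uses the 2d-Min-Heap with a DFUDS-style encoding instead, where the relevant node is identified directly by a select on the parenthesis string without a separate inorder-rank primitive. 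Either route reaches $2n+\order(n)$ bits and $\Or(1)$ queries, so your plan is sound; just be aware that the index-recovery step needs to be spelled out if you want a complete proof.
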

\begin{theorem}\label{th:rmq}
Consider the following problem: Given a sequence $A$, a query range $[l,r]$ and a threshold $t$,
compute $\{l\leq k\leq r\mid A[k]\geq t\}$.
If there exists an oracle to check if it holds $A[k]\geq t$ for some $k$ in constant time,
there exists a data structure for the problem with $\Or(n)$ bits and $\Or(\ou)$ query time.
\end{theorem}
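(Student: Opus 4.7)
The plan is to use the RMQ data structure on $A$ together with the assumed constant-time oracle, and answer a query by a simple recursive ``report-and-split'' procedure. We store the $2n+\order(n)$-bit RMQ structure of Fischer and Heun, which occupies $\Or(n)$ bits; we do \emph{not} store $A$ itself, since the oracle supplies the only comparison we ever need.

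To answer a query $(l,r,t)$, I call $\textsc{Report}(l,r)$, defined as follows: if $l>r$, return; otherwise let $k\leftarrow\mathrm{RMQ}(l,r)$, use the oracle to test whether $A[k]\geq t$; if not, return (all other entries in $[l,r]$ are $\leq A[k]<t$, so none qualifies); if yes, output $k$ and recurse on $\textsc{Report}(l,k-1)$ and $\textsc{Report}(k+1,r)$. Correctness is immediate: every reported index satisfies $A[k]\geq t$, and no qualifying index is missed because whenever a subrange is pruned we have certified that its maximum is below $t$.

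For the running time, I would argue by counting the nodes of the recursion tree. Each internal node is a call that outputs one index, so there are exactly $\ou$ internal nodes. Each call that terminates without recursing is either on an empty range or on a range whose maximum is below $t$, and each such leaf is either a child of an internal node or the initial call; a binary tree with $\ou$ internal nodes has at most $\ou+1$ leaves, so the total number of calls is at most $2\ou+1$. Each call performs one RMQ and one oracle query, both in constant time, giving $\Or(\ou)$ query time overall.

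There is no real obstacle here; the only subtle point is the leaf-counting argument that turns ``each report spawns two subcalls'' into a genuine $\Or(\ou)$ (rather than $\Or(\ou\log n)$) bound, and it is crucial that pruning at $A[k]<t$ is safe, which is exactly what the RMQ property guarantees. Combining this with the $\Or(n)$-bit space for the RMQ structure yields the claimed bounds, and since the oracle is assumed to be externally provided in $\Or(1)$ time, no additional space for $A$ is charged.
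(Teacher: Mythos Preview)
Your proof is correct and essentially identical to the paper's own argument: both build the $2n+\order(n)$-bit RMQ structure, answer a query by the same recursive report-and-split procedure, and bound the total number of calls by $2\ou+1$. The only cosmetic difference is that you count leaves via the ``full binary tree with $\ou$ internal nodes'' argument, whereas the paper counts unsuccessful calls as the $\ou+1$ gaps between consecutive outputs; both yield the same bound.
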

Consider a data structure to decide if $H[f(l,r)][k]\geq l$ or not for finding the index set.
This can be done by using the arrays $A, B$ in Section~\ref{subsec-dm} because
it is equivalent that $H[f(l,r)=g][k]\geq l$ and the frequency of $S[k]$ in range $[l,r]$ is at least $g$.

From the observation above, it is enough to use the following data structures to enumerate the solutions.
\begin{description}
 \item[Two-dimensional array $A$ storing positions of occurrences of symbols]\mbox{}\\ 
        An array to store positions of occurrences in ascending order for each symbol of the alphabet
 \item[Array $B$ to store ranks for strings]\mbox{}\\
        An array storing the rank for each index of $S$, that is, $B[i]$ stores the number of times that
        the symbol $S[i]$ appears in the substring $S[0,\ldots,i-1]$.
 \item[Two-dimensional array $C$ storing frequencies of modes for all ranges]\mbox{}\\
        The $(i,j)$ entry of the array $C$ stores the frequency of the modes for range $[i,j]$.
\item[$m$ bit-vectors $D$ storing boundaries of the array $C$]\mbox{}\\
      The array stores $m$ bit-vectors of Definition~\ref{bound-bit}.
\item[Two-dimensional array $H$ storing $m$ RMQ data structures for arrays of length $n$ each]\mbox{}\\
      An array storing $m$ sequences of Definition~\ref{def-h} as RMQ data structures.
      The sequences themselves are not stored.
\end{description}
The space complexity of the two-dimensional array $C$ varies depending on which data structure is used.
For example, we can use ones in Theorems~\ref{thm-2d-access} and~\ref{thm-m-small}.
The space complexities of $A, B, D, H$ are $\Or(n\log n)$ bits, $\Or(n\log n)$ bits, $\Or(nm)$ bits, $\Or(nm)$ bits,
respectively.

The pseudo code is given in Algorithm~\ref{algo-enum}.
Only Line~\ref{algo-en21} cannot be done in constant time.  For other lines,
the time complexity is proportional to the number of times the function ${\it range}$ is executed,
and it is $\Or(\ou)$.

To recap, the complexities of the algorithms become $\Or(S+nm+n\log n)$ bit space and $\Or(T+\ou)$ query time,
where $S$ is the space complexity of the two-dimensional array $C$, and $T$ is the time complexity to access an entry of $C$.
Using Theorems~\ref{thm-2d-access} and~\ref{thm-m-small}, we obtain the following.

\begin{theorem}
\label{thm-enum1}
There exists a data structure with space complexity $\Or\left(nm\left(\log\log\frac{n}{m}\right)^2+n\log n\right)$ bits
in addition to the input string $S$, which solves a query in time $\Or\left(\log\log\frac{n}{m}+\ou \right)$.
\end{theorem}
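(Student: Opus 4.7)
The plan is to observe that Theorem~\ref{thm-enum1} is the instantiation of the generic scheme driving Algorithm~\ref{algo-enum} obtained by realizing the frequency array $C$ with the succinct access structure of Theorem~\ref{thm-2d-access}. Correctness of the scheme is already furnished by the equivalence $B[i][j]=1 \Leftrightarrow H[f(i,j)][j] \geq i$, which reduces the mode index set for a query $[l,r]$ to the set $\{x \in [b,r] : H[g][x] \geq l\}$, where $g := f(l,r)$ and $b$ is the smallest position with $f(l,b)=g$. So a query decomposes into (i) one access to $C$ to recover $g$, (ii) one constant-time call into the boundary bit-vectors $D$ (via Theorem~\ref{thm-2d-pred}) to recover $b$, and (iii) an RMQ-driven threshold enumeration on $H[g]$ over $[b,r]$ via Theorem~\ref{th:rmq}.

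For the time analysis, step (i) costs $\Or(\log\log(n/m))$ by Theorem~\ref{thm-2d-access}, step (ii) is $\Or(1)$, and step (iii) yields the $\Or(\ou)$ term once the oracle ``$H[g][x] \geq l$'' is implemented in constant time. I would unfold the oracle into the equivalent predicate ``$S[x]$ occurs at least $g$ times in $[l,x]$'', which the arrays $A$ and $B$ of Section~\ref{subsec-dm} resolve with a single array lookup: test whether $A[S[x]][B[x]-g+1] \geq l$, with the obvious boundary convention when $B[x]-g+1 < 0$. Summing the three contributions gives the claimed query time $\Or(\log\log(n/m)+\ou)$. The space bound is then the sum of the five components: $A$ and $B$ take $\Or(n\log n)$ bits together; the $m$ boundary bit-vectors in $D$ use $\Or(nm)$ bits; the $m$ succinct RMQ structures for $H$ contribute another $\Or(nm)$ bits (the sequences $H[g]$ themselves need not be materialized, since the oracle is evaluated from $(A,B)$); and $C$ occupies $\Or(nm(\log\log(n/m))^2)$ bits by Theorem~\ref{thm-2d-access}. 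The total is $\Or(nm(\log\log(n/m))^2 + n\log n)$, as announced.

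The part I expect to demand the most care is the faithfulness of the constant-time oracle on $(A,B)$: one must verify that ``$A[S[x]][B[x]-g+1] \geq l$'' really coincides with the defining condition $H[g][x] \geq l$ throughout the enumeration range $[b,r]$, and that Theorem~\ref{th:rmq} only probes the oracle at positions where this equivalence is valid, so that the $\Or(\ou)$ term is genuinely preserved and no logarithmic predecessor cost sneaks in. Once this point is nailed down, the remaining verification reduces to routine bookkeeping.
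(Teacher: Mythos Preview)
Your proposal is correct and follows essentially the same route as the paper: instantiate the generic enumeration scheme (Algorithm~\ref{algo-enum}) with the $\tims$ access structure of Theorem~\ref{thm-2d-access} for $C$, use the boundary bit-vectors $D$ for the constant-time computation of $b$, and realize the RMQ threshold oracle on $H[g]$ via the arrays $A,B$. Your flagged concern about the oracle is exactly the point the paper addresses informally; within the enumeration window $[b,r]$ every probed position $x$ satisfies $f(l,x)=g$, so ``$S[x]$ occurs at least $g$ times in $[l,x]$'' coincides with $B[l][x]=1$ and hence with $H[g][x]\ge l$, which is all that Theorem~\ref{th:rmq} needs.
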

\begin{theorem}
\label{thm-enum2}
There exists a data structure with space complexity $\Or(nm+n\log n)$ bits
in addition to the input string $S$, which solves a query in time $\Or(\log m+\ou)$.
\end{theorem}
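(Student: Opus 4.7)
The plan is to plug the data structure of Theorem~\ref{thm-dec} into the general enumeration template of Algorithm~\ref{algo-enum}. Recall that after reversing rows (or columns) and shifting, the two-dimensional frequency array $C$ lies in $\tims(n,m)$, so Theorem~\ref{thm-dec} gives an $\Or(nm)$-bit representation that answers the threshold query ``is $C[i][j]\ge k$?'' in constant time. I would then compute the mode frequency $f(l,r)=C[l][r]$ by binary search on $k\in\{1,\ldots,m\}$, which costs $\Or(\log m)$ time and no additional space.

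With $f(l,r)=g$ in hand, I would run the enumeration part of Algorithm~\ref{algo-enum} exactly as described: query the RMQ structure built over the (implicit) sequence $H[g]$ with threshold $l$, using the result of Theorem~\ref{th:rmq}. The required constant-time oracle ``$H[g][k]\ge l$?'' is implemented via the positions array $A$ and the rank array $B$ from Section~\ref{subsec-dm}, since $H[g][k]\ge l$ is equivalent to $S[k]$ occurring at least $g$ times in $S[l\mathinner{\ldotp\ldotp} k]$, a check that takes constant time given $A$ and $B$. By Theorem~\ref{th:rmq} this enumerates the mode index set in $\Or(\ou)$ time.

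Summing the components: $\Or(nm)$ bits for the threshold-access representation of $C$ (Theorem~\ref{thm-dec}), $\Or(nm)$ bits for the $m$ boundary bit-vectors of $D$, $\Or(nm)$ bits for the $m$ RMQ structures on the rows $H[1],\ldots,H[m]$ (each $2n+\order(n)$ bits by Fischer--Heun), and $\Or(n\log n)$ bits for $A$ and $B$; the totals are $\Or(nm+n\log n)$ bits of space and $\Or(\log m+\ou)$ query time, which gives the claim. The only potentially delicate step is confirming that the Fischer--Heun RMQ structure can be built on $H[g]$ without storing $H[g]$ itself, but this is fine because the Fischer--Heun encoding depends only on the Cartesian tree shape and needs element comparisons just at construction time — at query time it returns an index into $H[g]$, and we never read $H[g]$ directly; the threshold oracle replaces all runtime accesses to $H[g]$.
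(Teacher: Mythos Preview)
Your proposal is correct and follows essentially the same route as the paper: instantiate the general enumeration template (Algorithm~\ref{algo-enum}) with the $\Or(nm)$-bit boundary representation of $C$ and obtain $f(l,r)$ by a binary search over the $m$ boundaries, exactly as in Theorem~\ref{thm-m-small}. One small redundancy: the ``threshold-access representation of $C$'' from Theorem~\ref{thm-dec} \emph{is} the collection of boundary bit-vectors $D$, so you are counting the same $\Or(nm)$ bits twice; this does not affect the bound, and otherwise your accounting and the justification of the RMQ oracle match the paper's argument.
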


By combining the data structure of~\cite{DM11}, we can further reduce the space complexity.
Consider a string $S_1$ which stores symbols of $S$ whose frequencies are at least $n^{1-\ep}$,
and a string $S_2$ which stores the rest of the symbols.
The string $S_1$ stores at most $n^\ep$ distinct symbols.
Using the data structures of~\cite{DM11} and Theorem ~\ref{thm-enum2}
for $S_1$ and $S_2$ respectively, the following holds.
\begin{theorem}
\label{thm-enum3}
There exists a data structure with space complexity $\Or(n^{1+\ep}\log n+n^{2-\ep})$ bits
in addition to the input string $S$, which solves a query in time $\Or(\log m+n^{1-\ep} +\ou)$.
\end{theorem}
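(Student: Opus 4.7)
The plan is to handle high-frequency and low-frequency characters by two different indexes and then merge the answers. Call a character of $S$ \emph{heavy} if it occurs at least $n^{1-\ep}$ times in $S$ and \emph{light} otherwise, and let $S_1$ and $S_2$ be the subsequences of $S$ formed by the heavy and light positions respectively. Then the alphabet of $S_1$ has size at most $n^{\ep}$ (since at most $n / n^{1-\ep} = n^{\ep}$ characters can have frequency $\geq n^{1-\ep}$), while every character appearing in $S_2$ has frequency strictly less than $n^{1-\ep}$, so the maximum frequency $m_2$ inside $S_2$ satisfies $m_2 < n^{1-\ep}$.

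For the light part, I apply Theorem~\ref{thm-enum2} to $S_2$. Its space becomes $\Or(n m_2 + n \log n) = \Or(n^{2-\ep})$ bits, and a query over the corresponding range in $S_2$ returns, in $\Or(\log m_2 + \ou_2) = \Or(\log m + \ou_2)$ time, every light character achieving the maximum light-frequency $m_2^{[l,r]}$ in $[l,r]$. Translating the original range $[l,r]$ to the induced range in $S_2$ is done in $\Or(1)$ using a fully-indexable bit-vector that marks the light positions of $S$, costing only $\Or(n)$ extra bits.

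For the heavy part I use the same idea as in the DM11 data structure, specialized to the small heavy alphabet: for each of the at most $n^{\ep}$ heavy characters $c$, I store a prefix-count array $P_c[0 \ldots n]$ with $P_c[i]$ equal to the number of occurrences of $c$ in $S[0 \ldots i-1]$. These arrays total $\Or(n^{1+\ep} \log n)$ bits and support the computation of $\#\{j \in [l,r] : S[j] = c\} = P_c[r+1] - P_c[l]$ in $\Or(1)$ time. Iterating over all heavy characters, I recover the maximum heavy frequency $m_1^{[l,r]}$ and list the heavy characters attaining it in $\Or(n^{\ep})$ total time, which is at most $\Or(n^{1-\ep})$ because $\ep \leq 1/2$.

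Finally, I combine the two answers by comparing $m_1^{[l,r]}$ and $m_2^{[l,r]}$ and keeping only the side(s) whose value equals the overall maximum, outputting the union when the two are tied. The total query time adds up to $\Or(\log m + n^{1-\ep} + \ou)$ with $\ou = \ou_1 + \ou_2$, and the total space to $\Or(n^{1+\ep} \log n + n^{2-\ep})$ bits, matching the claim. I expect the main delicate point, rather than any real obstacle, to be the bookkeeping at the join: ensuring that the output convention of the light-side subroutine (returning, say, a rightmost index per mode as in Algorithm~\ref{algo-enum}) is matched by the heavy-side scan, which can be done in $\Or(1)$ per output by pairing each $P_c$ with the sorted array of positions of $c$ (a total of $\Or(n \log n)$ bits over all heavy $c$, which is absorbed into $\Or(n^{1+\ep}\log n)$).
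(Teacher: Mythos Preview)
Your overall plan---splitting $S$ by a frequency threshold $n^{1-\ep}$, applying Theorem~\ref{thm-enum2} to the light part $S_2$, and handling the heavy part $S_1$ directly via its tiny alphabet---is exactly the paper's approach. The paper only says ``use the data structure of~\cite{DM11}'' for $S_1$ and gives no details; your explicit prefix-count arrays (one per heavy character) are a perfectly valid and arguably cleaner instantiation, and your observation that the resulting $\Or(n^{\ep})$ scan time is absorbed by the stated $\Or(n^{1-\ep})$ term for $\ep\le 1/2$ is correct.

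There is, however, a real gap in the time analysis at the combination step. You write ``$\ou=\ou_1+\ou_2$'', but this only holds when $m_1^{[l,r]}=m_2^{[l,r]}$. If $m_1^{[l,r]}>m_2^{[l,r]}$ the light modes are \emph{not} modes of $[l,r]$ at all, yet in your description you have already paid $\Or(\ou_2)$ to enumerate them before comparing. Nothing bounds $\ou_2$ by $\ou$ or by $n^{1-\ep}$: take a range where one heavy character appears twice and every light character appears exactly once; then $\ou=1$ but $\ou_2$ can be $\Theta(n)$. So the claimed bound $\Or(\log m+n^{1-\ep}+\ou)$ does not follow from your write-up as stated.

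The fix is easy and uses only machinery already present. First compute $m_1^{[l,r]}$ by scanning the heavy characters ($\Or(n^{\ep})$ time). Then compute $m_2^{[l,r]}$ \emph{without} enumerating: the data structure of Theorem~\ref{thm-enum2} stores the array $C$ (via Theorem~\ref{thm-m-small}), so a single access gives the light mode frequency in $\Or(\log m_2)$ time. Only if $m_2^{[l,r]}\ge m_1^{[l,r]}$ do you run the light-side enumeration; in that case every light mode reported is a genuine mode and $\ou_2\le\ou$. With this ordering the total time is $\Or(n^{\ep}+\log m+\ou)\subseteq\Or(n^{1-\ep}+\log m+\ou)$, matching the theorem.
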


The proposed data structures for the range mode enumeration problem are summarized in Table~\ref{table-enum}.

\section{Concluding Remarks}\label{sec:conclusion}
In this paper, we have given more efficient algorithms
for the indexing version of the range mode problem.
Our algorithms are more space- and time- efficient for small maximum frequency case than existing ones.
We have also considered a natural extension of the range mode problem:
range mode enumeration problem and given fast algorithms.

There are other related problems like Boolean matrix multiplication
problem.  A future work, we will give efficient algorithms for these
problems.




%


\bibliography{dfs}

\appendix

\section{Omitted Proofs}

\subsection{Proof of Lemma~\ref{lem-freq-require}}
\begin{proof}
We prove by contradiction.
Let $x$ be a mode of $M$, and assume $x$ is not a mode of $M_1$ and
$x\notin M_2$.  Let $y\in M_1$ be a mode of $M_1$.
From the definition the multiplicity of $y$ in $M_1$ is strictly larger than that of $x$ in $M_1$.
Because $x\notin M_2$, the multiplicity of $x$ in $M$ is equal to that of $x$ in $M_1$,
and it is smaller than that of $y$ in $M$.
This contradicts that $x$ is a mode of $M$.
\end{proof}

\subsection{Proof of Lemma~\ref{lem-freq-first}}
\begin{proof}
Let $c$ be any mode in range $[l_1,r_1]$ and $m$ be its frequency. 
Because range $[l_2,r_2]$ contains range $[l_1,r_1]$,
the frequency of $c$ in range $[l_2,r_2]$ is at least $m$.
On the other hand, because $f(l_1,r_1)=f(l_2,r_2)=m$,
the frequency of $c$ in range $[l_2,r_2]$ is at most $m$.
Therefore the frequency of $c$ in range $[l_2,r_2]$ becomes $m$
and $c$ is also a mode in range $[l_2,r_2]$.
\end{proof}

\subsection{Proof of Lemma~\ref{lem:sk}}

If $u>m$, the space complexity of the two-dimensional array $E$ is, from the assumption of $S_{k-1}(u,m)$,
\begin{align*}
	c4^{k-1}um\left(\frac{u}{m}\right)^\frac{1}{2^{2^{k-1}}}
&=c4^{k-1}nm\left(\frac{m}{n}\right)^\frac{1}{2^{2^{k-1}}}\left(\frac{u}{m}\right)^\frac{1}{2^{2^{k-1}}}\\
&=c4^{k-1}nm\left(\frac{1}{t}\right)^\frac{1}{2^{2^{k-1}}}\\
&=c4^{k-1}nm\left(\frac{m}{n}\right)^{\frac{1}{2^{2^{k-1}}}\frac{1}{2^{2^{k-1}}}}\\
&\leq c4^{k-1}nm.
\end{align*}
If $u\leq m$, it can be stored in $c^{1/3}u^{3/2}m^{1/2}$ bits by using the data structure $Z(u,m)$ for each row.
Therefore for any case
$E$ can be stored in at most $c4^{k-1}nm$ bits.

Next we consider the space complexity of storing differences inside non-flat blocks.
\begin{lemma}
\label{lem-d-sum}
For the summation of all $d_{i,j}$, it holds
$\dis\sum_{ 0\leq i<u\atop0\leq j<u}d_{i,j}\leq \frac{2mn}{t}$.
\end{lemma}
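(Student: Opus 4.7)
The plan is to bound $\sum d_{i,j}$ by double counting incidences between the $m$ boundaries of $A$ (in the sense of Definition~\ref{def-boundary}) and the $u^2$ blocks $B_{i,j}$ of the partition.

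First I would prove a pointwise identity: for every block,
\[
  d_{i,j}-1 \;=\; \#\{\, k\in\{1,\dots,m\} \mid \text{the $k$-th boundary passes through } B_{i,j}\,\}.
\]
By definition the $k$-th boundary separates entries $<k$ from entries $\ge k$, so it passes through $B_{i,j}$ exactly when the block contains both kinds of values. Since $A\in\tims(n,m)$ is non-decreasing along rows and columns, the minimum of $B_{i,j}$ is attained at its top-left corner $A[it][jt]$ and the maximum at its bottom-right corner $A[(i{+}1)t{-}1][(j{+}1)t{-}1]$. Hence the number of $k$ with $A[it][jt]<k\le A[(i{+}1)t{-}1][(j{+}1)t{-}1]$ equals the difference of these corner values, which is precisely $d_{i,j}-1$. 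In particular $d_{i,j}=1$ (i.e.\ the block is flat) iff no boundary enters it, so flat blocks contribute $0$ to both sides.

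Next I would sum this identity over all blocks and swap the order of summation:
\[
  \sum_{0\le i,j<u}(d_{i,j}-1)
  \;=\;
  \sum_{k=1}^{m}\#\{\,(i,j) \mid \text{the $k$-th boundary passes through } B_{i,j}\,\}.
\]
By Property~\ref{prop_bound_path} each boundary is a monotone lattice path from $(0,n)$ to $(n,0)$, i.e.\ a staircase made of $n$ ``down'' and $n$ ``left'' unit edges. Such a staircase, when overlaid on the $u\times u$ partition into $t\times t$ blocks, can enter at most $2u$ blocks — the same observation already used in the proof of Theorem~\ref{non-flat-2d}, since it sweeps across at most $u$ block-columns and $u$ block-rows. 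Thus each of the $m$ boundaries contributes at most $2u$ to the right-hand side, giving $\sum_{i,j}(d_{i,j}-1)\le 2um=2mn/t$. Since only the non-flat blocks carry nonzero $d_{i,j}-1$, and these are exactly the blocks for which an $F_{i,j}$ table is stored, this is the mass that the subsequent space accounting needs; writing it as $\sum d_{i,j}\le 2mn/t$ is the form stated in the lemma (the flat blocks contributing ``$+1$'' are accounted for separately by the storage of $E$).

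The main obstacle is purely bookkeeping: one has to pin down what ``the $k$-th boundary passes through block $B_{i,j}$'' means (visiting an internal vertex of the block region, or merely touching its corner) and check that both the corner-value identity and the $\le 2u$ staircase bound use the same convention. Once that convention is fixed, each step is a direct consequence of the monotonicity of $A\in\tims(n,m)$ and the shortest-path description of the boundaries, with no further estimation needed.
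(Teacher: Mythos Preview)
Your argument is correct. The paper takes a different route: it groups the blocks by anti-diagonals $i-j=l$ and asserts $\sum_{i-j=l}d_{i,j}\le m$ for each of the $2u-1$ values of $l$, on the grounds that along such a diagonal the value-intervals $[B_{i,j}[0][0],\,B_{i,j}[t{-}1][t{-}1]]$ lie inside $[0,m{-}1]$ and are pairwise non-overlapping (the maximum of one block is at most the minimum of the next, by monotonicity in both coordinates); summing over $l$ then gives the lemma. Your boundary double-count reaches the same bound by reusing the staircase machinery already set up for Theorem~\ref{non-flat-2d}, so it integrates more smoothly with the surrounding text, whereas the paper's diagonal telescoping is more self-contained but requires its own monotonicity check.

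You are also right to isolate the ``$+1$'' issue. Strictly speaking, both arguments only establish $\sum_{i,j}(d_{i,j}-1)\le 2mn/t$: the paper's per-diagonal claim $\sum_{i-j=l}d_{i,j}\le m$ already fails when a long diagonal consists entirely of flat blocks, each contributing $d_{i,j}=1$. This is harmless for the application in Lemma~\ref{lem:sk}, since there one only needs to sum over the non-flat blocks, and for those $\sum d_{i,j}\le \sum(d_{i,j}-1)+\#\{\text{non-flat}\}\le 2mn/t+2um=4mn/t$, which changes nothing asymptotically. Your remark that the flat ``$+1$'' terms are absorbed by the storage of $E$ is exactly the right patch.
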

\begin{proof}
From the column-wise and row-wise monotonicity, for each $l=-u+1,\ldots ,u-1$, it holds
$\dis\sum_{i-j=l}d_{i,j}\leq m$.
By summing this for all $l$, we obtain the claim. 
\end{proof}

Consider the space complexity of the data structure storing $F_{i,j}\in \tims(t,d_{i,j})$ for non-flat blocks $B_{i,j}$.
If $t>d_{i,j}$, by using $S_{k-1}(t,d_{i,j})$, the space becomes $c4^{k-1}td_{i,j}\left(\frac{t}{d_{i,j}}\right)^{\frac{1}{2^{(2^{k-1})}}}$ bits.
If $t\leq d_{i,j}$, we store each row of the two-dimensional array in $t\cdot c^{1/3} \sqrt{td_{i,j}}$ bits
by using $Z(t,d_{i,j})$ which support constant access.
For both time and space, the former case has worse complexities.
Therefore we analyze the space by assuming every block is stored in
$c4^{k-1}td_{i,j}\left(\frac{t}{d_{i,j}}\right)^{\frac{1}{2^{(2^{k-1})}}}$ bits.
\begin{align*}
\dis\sum_{i,j\atop B_{i,j} \mbox{ not flat}} c4^{k-1}td_{i,j}\left(\frac{t}{d_{i,j}}\right)^{\frac{1}{2^{2^{k-1}}}}&\leq \dis\sum_{i,j}c4^{k-1}td_{i,j}\left(\frac{t}{d_{i,j}}\right)^{\frac{1}{2^{2^{k-1}}}}
\leq \dis\sum_{i,j}c4^{k-1}td_{i,j}t^{\frac{1}{2^{2^{k-1}}}}\\
&\leq  c4^{k-1} t^{1+\frac{1}{2^{2^{k-1}}}}\dis\sum_{i,j}d_{i,j}
\leq  c4^{k-1} t^{1+\frac{1}{2^{2^{k-1}}}}\cdot\frac{2nm}{t}\\
&= 2c\cdot 4^{k-1}nm\left(\frac{n}{m}\right)^{\frac{1}{2^{2^{k-1}}}\frac{1}{2^{2^{k-1}}}}
= 2c\cdot 4^{k-1}nm\left(\frac{n}{m}\right)^{\frac{1}{2^{2^{k}}}}.
\end{align*}

We also need to store pointers to the data structures $F_{i,j}$ because their size varies depending on $(i,j)$.
As a bijection between $\{0,\ldots,m-1\}\times \{0,\ldots,2u-2\}$ and $\{0,1,\ldots,m(2u-1)\}$, we define
$(i,j)\mapsto i(2u-1)+j$.
By using this, we can regard the pointers to the data structures
as a monotone increasing sequence $P$ with $2um$ terms and range $2c\cdot 4^{k-1}nm\left(\frac{n}{m}\right)^{\frac{1}{2^{2^{k}}}}$.
By representing $P$ by the data structure $Z(2um,2c\cdot 4^{k-1}nm\left(\frac{n}{m}\right)^{\frac{1}{2^{2^{k}}}})$, 
it holds
\begin{align*}
c^{1/3}\sqrt{ 2um \cdot c\cdot 4^{k-1}nm\left(\frac{n}{m}\right)^{\frac{1}{2^{2^{k}}}}}\leq c^{5/6}2^{k}nm.
\end{align*}
Therefore the space is upper-bounded by $c^{5/6}2^{k}nm$ bits.


The total space of the data structures for $S_k$ is:
\begin{align*}
\underbrace{c4^{k-1}nm}_{\mbox{array }E}+\underbrace{(c^{1/3}\cdot2\sqrt{2}+2)\dis\frac{nm}{t}}_{{\Psi}}+\underbrace{2c\cdot 4^{k-1}nm\left(\frac{n}{m}\right)^{\frac{1}{2^{2^{k}}}}}_{\mbox{total space for }F}+\underbrace{c^{5/6}2^{k}nm}_{P}\ {\rm bits}.
\end{align*}
By letting $c\geq 10^6$, for any positive integer $k$, it holds
\begin{align*}
&\ \ \ \ \ c4^{k-1}nm+(2\sqrt{2}c^{1/3}+2)\dis\frac{nm}{t}+2c\cdot 4^{k-1}nm\left(\frac{n}{m}\right)^{\frac{1}{2^{2^{k}}}}+c^{5/6}2^{k}nm\\
&\leq (c4^{k-1}+c^{1/3}2\sqrt{2}+2+2c\cdot 4^{k-1}+c^{5/6}2^{k})nm\left(\frac{n}{m}\right)^{\frac{1}{2^{2^{k}}}}\\
&\leq c4^knm\left(\frac{n}{m}\right)^{\frac{1}{2^{2^{k}}}}.
\end{align*}
This proves there exists a data structure of 
$c4^knm\left(\frac{n}{m}\right)^{\frac{1}{2^{2^{k}}}}$ bits
for $S_k(n,m)$.

\subsection{Proof of Theorem~\ref{th:psi}}
\begin{proof}
We use a two-dimensional Boolean array $K$ of $2um$ bits storing
for each member $(x,y)$ of $\{0,\ldots,m-1\}\times \{0,\ldots,2u-2\}$,
True if $\Psi(x,y)\neq\bot$ and False if $\Psi(x,y)=\bot$.
In addition to this, for each $x$ we create two integer sequences $I_x,J_x$ of length $2u-1$ each, as follows.
For each $y\in\{0,\ldots,2u-2\}$, we define $(I_x[y],J_x[y])=\Psi(x,y)$ if $\Psi(x,y)\neq \bot$.
If $\Psi(x,y)= \bot$, we choose arbitrary values for $I_x[y]$ and $J_x[y]$ satisfying
$I_x[y]\in \ims(2u-1,u),J_x[y]\in \dms(2u-1,u)$.
From Lemma~\ref{lem-2d-mono}, such sequences $I_x,J_x$ must exist.
By using the data structure $Z(2u-1,u)$ we can store each sequence in at most $\sqrt{2}c^{1/3}u$ bits
and access in constant time.  The total space for these $2m$ sequences is at most
$2um+2\sqrt{2}c^{1/3}um=(2\sqrt{2}c^{1/3}+2)\dis\frac{nm}{t}$ bits.
\end{proof}

\subsection{Proof of Theorem~\ref{th:rmq}}
\begin{proof}
We recursively find range maximum values as in Algorithm~\ref{algo-dfs-rmq}.
Consider the number of times that the function ${\it range}$ of Algorithm~\ref{algo-rec-rmq} is called.
The number of times that an item is added to the set $\ans$ in the function
is at most $\ou$.
On the other hand, if $\ans=\{x_1,\ldots,x_{\ou} \}$, the number of times that
any item is not added to the set $\ans$ is at most $\ou+1$, for ranges 
$[l,x_1-1],[x_1+1,x_2-1],\ldots,[x_{\ou-1}+1,x_{\ou}-1],[x_{\ou}+1,r]$.
Therefore the total number of times that ${\it range}$ is called is at most $2\ou+1$.
From the assumption of the oracle, Algorithm~\ref{algo-dfs-rmq} runs in $\Or(\ou)$ time.
Because the algorithm uses no data structures other than RMQ, the space complexity is $\Or(n)$ bits.
\end{proof}

\section{Omitted Pseudo codes and Figures}

\begin{algorithm}[h]                      
\caption{Obtaining the entry $A[x][y]$ of an two-dimensional array $A\in \tims(n,m)$.}
\label{algo-2d-rec}
	\begin{algorithmic}[1]
	\Require Indices $x,y$
	\Ensure $A[x][y]$
	\State	 $x_b\leftarrow x/t,\ y_b\leftarrow y/t$\Comment{The block number}
	\State	 $x_r\leftarrow x\%t,\ y_r\leftarrow y\%t$\Comment{The position in the block}
	\State ${\it ans}\leftarrow E[x_b][y_b]$
	\If {Block $B_{x_b,y_b}$ is not flat}\label{if-2d-flat}
		\State ${\it ans} \leftarrow {\it ans}+F_{x_b,y_b}[x_r][y_r]$
	\EndIf \\
	\Return ${\it ans}$
	\end{algorithmic}
\end{algorithm}

\begin{algorithm}[h]                         
	\caption{Decide if $B_{i,j}$ is flat or not}
\label{algo-flat} 	
\begin{algorithmic}[1]    
	\Require Block number $i,j$
	\Ensure If block $B_{i,j}$ is flat or not
	\State $x\leftarrow E[i][j],y\leftarrow i-j+u-1$\Comment{$\Phi(i,j)=(x,y)$}
	\If {$K[x][y]=\ $False}\Comment{$\Psi(x,y)=\bot$}
	\State {\bf return} block $B_{i,j}$ is flat
	\EndIf
	\If{$I_x[y]=i$ and $J_x[y]=j$}\Comment{$\Psi(x,y)=(i,j)$}
	\State {\bf return} block $B_{i,j}$ is not flat
	\Else
	\State {\bf return} block $B_{i,j}$ is flat
	\EndIf
	\end{algorithmic}
\end{algorithm}

\begin{algorithm}[h]                      
	\caption{A function to compare $A[i][j]$ with $k$}
	\label{algo-bou}                          
	\begin{algorithmic}[1]    
	\Require an index $(i,j)$ of $A$ and an index $k$ of a boundary
	\Ensure if $A[i][j]\geq k$ or not
	\If {$n-\ra_0(\sel_1(i,B_k),B_k)\leq j$}
	\State  {\bf return} YES
\Else
\State  {\bf return} NO
\EndIf
	\end{algorithmic}
\end{algorithm}

\begin{algorithm}[h]                      
	\caption{Algorithm for Theorem~\ref{thm-loglog2}}         
	\label{algo-loglog}                          
	\begin{algorithmic}[1]    
	\Require a query range $[l,r]$
	\Ensure a mode in range $[l,r]$ of string $S$
	\State $f\leftarrow C[l][r]$ \Comment{using Theorem~\ref{thm-2d-main}}
	\If{$f=1$}
	\State $i \leftarrow l$
	\Else
	\State $i\leftarrow \min\{x\mid C[l][x]=f\}$\Comment{using Theorem~\ref{thm-2d-pred}}
	\EndIf \\
	\Return $S[i]$	
	\end{algorithmic}
\end{algorithm}

\begin{algorithm}[h]                      
	\caption{Algorithm for the range mode enumeration problem using the data structure for finding the leftmost index of range modes}
	\label{algo-enum-left}                          
	\begin{algorithmic}[1]    
	\Require a query range $[l,r]$
	\Ensure the set of all range modes ${\it ans}$
	\State ${\it ans} \leftarrow \{\}$
	\State $(f,i)\leftarrow D([l,r])$\Comment{a pair of the frequency $f$ of modes in range $[l,r]$ and the leftmost index $i$}
	\State $x\leftarrow l$
	\While $\ $\label{algo-whi}
	\State $({\it freq},i)\leftarrow D([x,r])$
	\If {$f>{\it freq}$}\Comment{if the frequency {\it freq} of modes in $[x, r]$ is less than the frequency $f$ of modes in $[l, r]$}
		\State {\bf break}
	\EndIf
	\State ${\it ans}\leftarrow {\it ans}\cup \{S[i]\}$
	\If {$i=r$} \Comment{if the query range becomes empty}
		\State {\bf break}
	\EndIf
	\State $x\leftarrow i+1$\Comment{update the query range}
	\EndWhile\\
	\Return ${\it ans}$
	\end{algorithmic}
\end{algorithm}

\begin{algorithm}[h]                      
	\caption{Find the leftmost index of range modes (assuming $l,r$ belong to different blocks)}         
	\label{algo-left-ind}                          
	\begin{algorithmic}[1]    
	\Require a query range $[l,r]$  ($b_l := \lfloor l/n^{\ep} \rfloor \neq b_r := \lfloor r/n^{\ep} \rfloor$)
	\Ensure $(\mbox{leftmost index} {\it li}, \mbox{frequency} f)$
	\State $f \leftarrow C[b_l][b_r]$\Comment{obtain the frequency of modes of span}
	\State ${\it li} \leftarrow D'[b_l][b_r]$
	\For{$i=l,\ldots,(b_l+1)s-1$} \Comment{check symbols in the prefix}
		\State ${\it cnt}\leftarrow 0$
        \While {$(\mbox{the number of terms of} A[S[i]])>B[i]+{\it f}-1$ and $A[S[i]][B[i]+{\it f}-1]\leq r$} 
			\State ${\it li}\leftarrow \min({\it li},i)$
			\State $f\leftarrow f+1,{\it cnt} \leftarrow {\it cnt}+1$
		\EndWhile
		\If{${\it cnt}>0$}
			\State ${\it f} \leftarrow {\it f}-1$
		\EndIf
	\EndFor
	\For{$i=b_rs,\ldots,r$} \Comment{check symbols in the suffix}
	\State ${\it cnt}\leftarrow 0$
		\While {$0\leq B[i]-{\it freq}+1$ and $A[S[i]][B[i]-{\it freq}+1]\geq l$} 
			\State $f\leftarrow f+1,{\it cnt} \leftarrow {\it cnt}+1$
			\State ${\it li}\leftarrow \min({\it li},A[S[i]][B[i]-{\it freq}+1])$
		\EndWhile	
\If{${\it cnt}>0$}
			\State ${\it f} \leftarrow {\it f}-1$
		\EndIf
\EndFor\\
	\Return $({\it li},f)$
	\end{algorithmic}
\end{algorithm}

\begin{algorithm}[h]                      
	\caption{Find all indices in range $[l,r]$ of sequence $A$ with frequency at least $t$}
	\label{algo-dfs-rmq}                          
	\begin{algorithmic}[1]    
	\Require a query range $[l,r]$, a threshold $t$ 
	\Ensure set of indices $\ans$
	\State $\ans \leftarrow \{\}$
	\State ${\it range}(l,r)$\Comment{the function in Algorithm~\ref{algo-rec-rmq}}\\
	\Return $\ans$
	\end{algorithmic}
\end{algorithm}
\begin{algorithm}[h]                      
	\caption{The recursive function ${\it range}$ called in Algorithm~\ref{algo-dfs-rmq}}
	\label{algo-rec-rmq}                          
	\begin{algorithmic}[1]    
	\Require a range $[x,y]$
	\If {$x>y$}
	\State {\bf return}
	\EndIf 
	\State $id\leftarrow {\it RMQ}(x,y)$
	 \If{$A[id]\geq t$}
	\State $\ans \leftarrow \ans \cup \{id\}$
	\State ${\it range}(x,id-1)$ \Comment{the range to the left of $id$}
	\State ${\it range}(id+1,y)$\Comment{the range to the right of $id$}
	\EndIf
	\end{algorithmic}
\end{algorithm}

\begin{algorithm}[h]                      
	\caption{Algorithm for finding the index set}
	\label{algo-enum}                          
	\begin{algorithmic}[1]    
	\Require a query range $[l,r]$ 
	\Ensure the index set $\ans$
	\State $g\leftarrow f(l,r)$\Comment{using data structure $C$}\label{algo-en21}
\State $b\leftarrow \min\{t\mid f(l,t)\geq g\}$\Comment{using boundary bit-vector $D$}\label{algo-en22}
\State $\ans \leftarrow \{\}$
\State ${\it range}(b,r)$\\
	\Return $\ans$
\\ 
 \State {\bf def function} ${\it range}(x,y)$
 \If{$x>y$}
	\State {\bf return}
\EndIf 
\State ${\it id}\leftarrow RMQ(x,y,H[g])$ \Comment{the index of maximum value in range $[x,y]$ of sequence $H[g]$}
\If{$A[S[id]]-g+1<0$ {\bf and} $B[S[id]][A[S[id]]-g+1]\geq l$}
\State $\ans\leftarrow \ans \cup \{\it id\}$
	\State ${\it range}(x,id-1)$ \Comment{the range to the left of $id$}
	\State ${\it range}(id+1,y)$\Comment{the range to the right of $id$}
\EndIf
 \State {\bf end def}
	\end{algorithmic}
\end{algorithm}

\begin{algorithm}[h]                      
	\caption{Algorithm for finding the mode index set}
	\label{algo-n2bits}                          
	\begin{algorithmic}[1]    
	\Require a query range $[l,r]$ 
	\Ensure the mode index set $\ans$
	\State $g\leftarrow f(l,r)$\label{algo-n21}
\State $b\leftarrow \min\{t\mid f(l,t)\geq g\}$\label{algo-n22}
\State $x\leftarrow r$
\State $\ans \leftarrow \{\}$
	\While{$x\geq b$} \label{algo-n2while}
\State $\ans \leftarrow \ans \cup \{x\}$\Comment{add to the mode index set}
\State $x\leftarrow \sel_1(\ra_1(x-1,B[l]),B[l])$\Comment{update $x$}
\EndWhile\\
	\Return $\ans$
	\end{algorithmic}
\end{algorithm}

	\begin{figure}[h]
		\begin{center}
			\includegraphics[width=90mm]{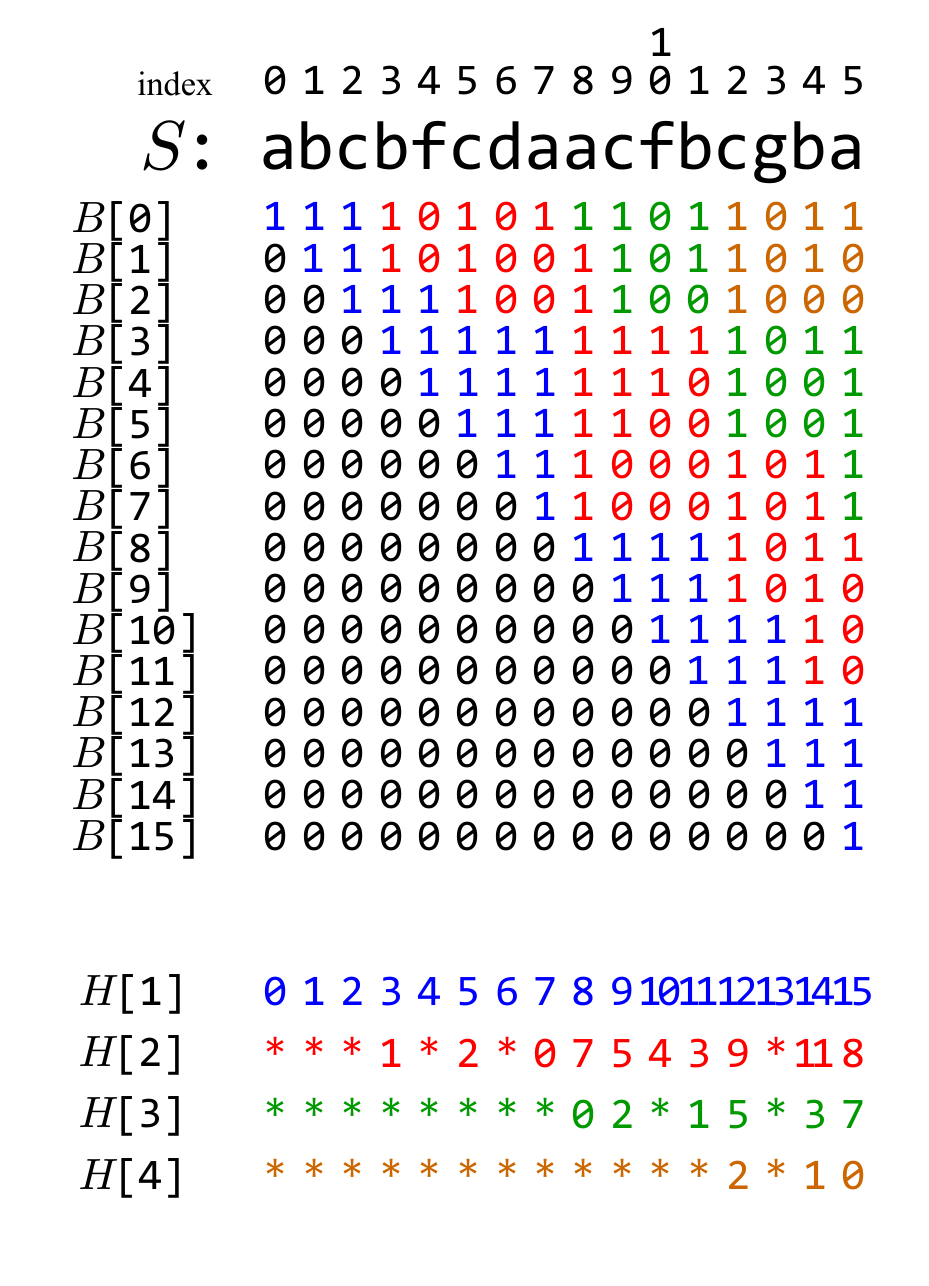}
			\end{center}
		\caption{Bit-vectors $B[0],\ldots,B[15]$ and sequences $H[1],\ldots,H[4]$ for string $S$ of length $n=16$.  The marks * stand for $-1$.
		Colors of numbers for $B$ represent frequencies of modes of the corresponding ranges.
		Blue, red, green, and brown colors represent frequencies 1, 2, 3, and 4, respectively.}
		\label{fig-nmbit}
	\end{figure}

\end{document}